\definecolor{DarkRed}{rgb}{0.55,0,0}
\definecolor{DarkBlue}{rgb}{0,0,0.55}
\newtheorem{definition}{Definition}
\newtheorem{corollary}{Corollary}
\newtheorem{proposition}{Proposition}
\newcommand\powercell{P}
\begin{document}
%\nocite{*}

\title{\uppercase{Higher-dimensional power diagrams for semi-discrete optimal transport}}
\author{Philip Claude Caplan}
\date{Middlebury College Department of Computer Science, Middlebury, VT, U.S.A.,\\pcaplan@middlebury.edu}

\abstract{
Efficient algorithms for solving optimal transport problems are important for measuring and optimizing distances between functions.
In the $L^2$ semi-discrete context, this problem consists of finding a map from a continuous density function to a discrete set of points so as to minimize the transport cost, using the squared Euclidean distance as the cost function.
This has important applications in image stippling, clustering, resource allocation and in generating blue noise point distributions for rendering.
Recent algorithms have been developed for solving the semi-discrete problem in $2d$ and $3d$, however, algorithms in higher dimensions have yet to be demonstrated, which rely on the efficient calculation of the power diagram (Laguerre diagram) in higher dimensions.
Here, we introduce an algorithm for computing power diagrams, which extends to any topological dimension.
We first evaluate the performance of the algorithm in $2d-6d$.
We then restrict our attention to four-dimensional settings, demonstrating that our power diagrams can be used to solve optimal quantization and semi-discrete optimal transport problems, whereby a prescribed mass of each power cell is achieved by computing an optimized power diagram.
}

\keywords{Voronoi diagram, power diagram, Laguerre diagram, semi-discrete optimal transport, high dimensions, quantization.}

\maketitle
\thispagestyle{empty}
\pagestyle{empty}

\section{Introduction}
The problem of transporting mass from one location to another so as to minimize total cost was studied by Gaspard Monge in 1781~\cite{Monge_1781}.
More generally, this problem consists of finding a map between two measures, and can be applied to problems in image processing~\cite{Balzer_2009,deGoes_2012,Ma_2018a,Ma_2018b,Meyron_2018,Meyron_2019}, machine learning~\cite{Peyre_2019} geometric processing~\cite{deGoes_2011}, rendering~\cite{Bonneel_2019,Paulin_2020}, resource allocation~\cite{Hartmann_2020} and particle-based simulations~\cite{Levy_2018,deGoes_2015}.
This problem also appears in the solution to partial differential equations, specifically when the two measures are continuous - see for example, the seminal work of Benamou and Brenier~\cite{Benamou_2000}.

Another (possibly more important) application is Villani's example of transporting bread from bakeries to caf\'es~\cite{Villani_2009}.
There are a discrete number of bakeries, each with a specific baking capacity, and there are a discrete number of caf\'es -- see Fig.~\ref{fig:bakeries}.
One may wish to distribute bread from bakeries to caf\'es so as to minimize the transport cost (Fig.~\ref{fig:cartoon-dot}).
This is a fully discrete optimal transport problem, since both the input and output measures are discrete.
This type of problem is important in shape- and image-matching problems, supply chain management and clustering~\cite{Solomon_2017}.
The revived interest in solving these types of optimal transport problems is due to the introduction of efficient algorithms that enable the computation of an optimal transport map for \textit{large} data sets.
These methods are mostly based on the idea of entropic regularization, a relaxation of the dual Kantorovich problem~\cite{Merigot_2017}.
A noteworthy algorithm for solving this discrete problem is known as the Sinkhorn-Knopp algorithm~\cite{Cuturi_2013}.

\begin{figure*}[h]
	\centering
	\begin{subfigure}[b]{0.475\textwidth}
		\resizebox{\textwidth}{!} {
			% \begin{tikzpicture}
			% 	\def\SCALEA{0.005}
			% 	\def\SCALEB{0.02}
			% 	\coordinate (b1) at (0,0);
			% 	\coordinate (b2) at (0.6,0.75);
			% 	\coordinate (b3) at (0.5,-0.5);
			% 	\coordinate (b4) at (-1,-.5);
			% 	\coordinate (b5) at (-0.7,.35);
			% 	\coordinate (b6) at (-0.4,0.4);
			%
			% 	\node (c1) at (-0.2,-0.3) {\includegraphics[scale=\SCALEB]{}};
			% 	\node (c2) at (-0.6,0) {\includegraphics[scale=\SCALEB]{fig/cartoon-coffee.png}};
			% 	\node (c3) at (0.3,0.3) {\includegraphics[scale=\SCALEB]{fig/cartoon-coffee.png}};
			% 	\draw [->,>=latex,line width=0.5pt] (b1.center) -- (c1.center);
			% 	\draw [->,>=latex,line width=0.5pt] (b2.center) -- (c3.center);
			% 	\draw [->,>=latex,line width=0.5pt] (b6.center) -- (c1.center);
			% 	\draw [->,>=latex,line width=0.5pt] (b5.center) -- (c2.center);
			% 	\draw [->,>=latex,line width=0.5pt] (b4.center) -- (c2.center);
			% 	\draw [->,>=latex,line width=0.5pt] (b3.center) -- (c1.center);
			%
			% 	\node at (b1) {\includegraphics[scale=\SCALEA]{}};
			% 	\node at (b2) {\includegraphics[scale=\SCALEA]{fig/cartoon-bread.png}};
			% 	\node at (b3) {\includegraphics[scale=\SCALEA]{fig/cartoon-bread.png}};
			% 	\node at (b4) {\includegraphics[scale=\SCALEA]{fig/cartoon-bread.png}};
			% 	\node at (b5) {\includegraphics[scale=\SCALEA]{fig/cartoon-bread.png}};
			% 	\node at (b6) {\includegraphics[scale=\SCALEA]{fig/cartoon-bread.png}};
			% \end{tikzpicture}
			\includegraphics{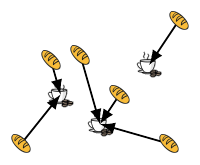}
		}
		\caption{Discrete optimal transport problem.
			Some caf\'es may be a in a busy part of town and require a lot of bread to be delivered.
		}
		\label{fig:cartoon-dot}
	\end{subfigure}
	\hspace{5pt}
	\begin{subfigure}[b]{0.475\textwidth}
		\resizebox{\textwidth}{!} {
			\includegraphics[scale=0.5]{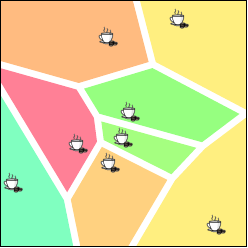}
		}
		\caption{Semi-discrete optimal transport problem. The city is divided such that each caf\'e serves an equal population (integral of the population density over region).}
		\label{fig:cartoon-sdot}
	\end{subfigure}
	\caption{Motivation for studying optimal transport: (a) transporting bread from bakeries to caf\'es and (b) assigning a population (with an assumed continuous density model) to caf\'es based on distance.}
	\label{fig:bakeries}
\end{figure*}

If, instead, we can construct a model for a continuous population density across the city, we may wish to partition the city so as to equidistribute the bread demand using the distance from the population to each caf\'e (Fig.~\ref{fig:cartoon-sdot}).
This would be a semi-discrete optimal transport problem.
If the cost function is the squared Euclidean distance, then we have an $L^2$ semi-discrete optimal transport problem.
The optimal transport problem then consists of finding this partition such that every caf\'e has the same bread demand.

The solution to the latter problem can be obtained with a power diagram (or Laguerre diagram).
The power diagram is a generalization of the Voronoi diagram in which sites are equipped with \textit{weights} that control the power distance from a point to a particular site.
The weights on these Voronoi sites can be tuned such that each region (a power cell) has an equal mass under the input density measure.
In 1987, Aurenhammer observed that a power diagram can be computed via the restriction of a Voronoi diagram in a higher dimensional space~\cite{Aurenhammer_1987,Aurenhammer_1991}.
That is, the power diagram is the intersection of a higher-dimensional Voronoi diagram with our domain (in the bakery example, the city) embedded to a higher dimensional space.
This concept of embedding has also been studied to achieve anisotropic Voronoi diagrams and meshes~\cite{Canas_2006_Surface_remeshing_arbitrary_codimension,Levy_2013_Vorpaline,Dassi_2015,Dassi_2016,Dassi_2017,Nivoliers_2015}.
In particular, L\'evy and Bonneel introduced the \textit{security radius theorem} which enabled the fast computation of the restricted Voronoi diagram~\cite{Levy_2013_Vorpaline}.
In 2015, this technique was used for semi-discrete optimal transport in three dimensions, and has since been used in fluid simulations \cite{Levy_2018,deGoes_2015} and astrophysics applications~\cite{Levy_2020}.
Power diagrams have also proven useful for computer graphics applications such as image stippling~\cite{Balzer_2009,deGoes_2012,Ma_2018a,Ma_2018b}, in which a blue noise sampling distribution is desirable to reduce aliasing effects that are produced by a regular sampling distribution, but produces better results than a white noise distribution.
Optimal transport has also recently been used to obtain blue noise sampling distributions for rendering applications~\cite{Bonneel_2019,Paulin_2020}.

Software implementations for computing power diagrams are available in \texttt{geogram} \cite{Levy_2016_Geogram}, \texttt{Voro++} \cite{Rycroft_2009} and \texttt{CGAL} \cite{CGAL_software,CGAL_Voronoi} in two and three dimensions.
These implementations often rely on clipping Voronoi polygons or polyhedra using algorithms such as Sutherland-Hodgman re-entrant clipping \cite{Sutherland_1974}, which are difficult to extend to a higher-dimensional setting.
A demonstration of the power diagram calculation in four or higher dimensions has yet to be demonstrated.
We make use of a simple result from polytope theory to convert between a facet-based and vertex-based representation of the power cells~\cite{Henk_2004_Basic_properties_of_convex_polytopes,Ziegler_1995_Lectures_on_Polytopes}.
It is well known that the number of vertices in a power cell grows exponentially with dimension, but the computation of the power diagram for semi-discrete optimal transport is nonetheless useful for higher-dimensional applications, particularly spatio-temporal simulations of partial differential equations, which would require four-dimensional power diagrams.
Furthermore, L\'evy and Bonneel's security radius theorem enables an efficient, emabarassingly parallel computation of the power cells, which makes the computation tractable in four dimensions.

The goal of this paper is to develop the theory and describe an algorithm for computing higher-dimensional power diagrams for semi-discrete optimal transport.
To our knowledge, this is the first demonstration of an algorithm for higher-dimensional power diagrams (in terms of topological dimension) in the literature.
We begin by reviewing the necessary background on semi-discrete optimal transport and power diagrams (Section~\ref{sec:background}) and then present our algorithm for computing power diagrams in a dimension-independent manner (Section~\ref{sec:algorithm}).
We evaluate the performance of the algorithm in $2d-6d$ (Section~\ref{sec:performance}) and then apply the algorithm to quantization and $L^2$ semi-discrete optimal transport in four dimensions (Section~\ref{sec:applications}), successfully demonstrating how a uniform target mass can be achieved under a prescribed density function.

\section{Background}
\label{sec:background}
Let us briefly review the optimal transport problem.
For a more complete review, we encourage the interested reader to see the works of L\'evy, \cite{Levy_2018,Levy_2015}, Peyr\'e~\cite{Peyre_2019}, Santambrogio~\cite{Santambrogio_2015} and Villani~\cite{Villani_2009,Villani_2003}.
The optimal transport problem consists of finding a map that transports a probability measure $\mu$ supported on a $d$-dimensional domain $\mathcal{X} \subset \mathbb{R}^d$ onto another probability measure $\nu$ supported on another domain $\mathcal{Y} \subset \mathbb{R}^d$.
The Monge formulation seeks the transport map $T : \mathcal{X} \to \mathcal{Y}$ from the following minimization statement:
\begin{align}\label{eq:monge-problem}
	\begin{split}
	T =& \arg\min_{T} \int\limits_{\mathcal{X}} c(x,T(x))\,\mathrm{d}\mu(\mathbf{x}),\\
	 &\mathrm{subject\ to:}\\ & \forall B \subset \mathcal{Y},\ \mu(T^{-1}(B)) = \nu(B),
 \end{split}
\end{align}
where $c(\cdot,\cdot) : \mathcal{X} \times \mathcal{Y} \to \mathbb{R}$ is the \emph{cost} of transporting mass from $\mathcal{X}$ to $\mathcal{Y}$.
The constraint on $T$ is a statement about conservation of mass for any Borel set $B \subset \mathcal{Y}$.

Kantorovich introduced a relaxation of the original Monge problem of Eq.~\ref{eq:monge-problem}, in which the \textit{transport map} is replaced by a \textit{transport plan}~\cite{Kantorovich_1958}, thereby reformulating Monge's problem as a linear programming problem with convex constraints.

\subsection{Semi-discrete optimal transport}
In the semi-discrete setting, the source $\mu$ is a continuous measure, and the target measure $\nu$ is discrete, i.e. it is a sum of $N$ Dirac masses: $\nu = \sum_{i=1}^N \nu_i \delta_{y_i}$.
These Dirac masses are each located at a site $\mathbf{y}_i \in \mathbf{Y}$ where $\mathbf{Y} \in \mathbb{R}^{d\times N}$ can be interpreted as a matrix in which each column contains the coordinates of a site $\mathbf{y}_i$.
The power cell $\powercell_i(\mathbf{Y},\mathbf{w})$ for the Dirac mass at $\mathbf{y}_i$ is defined as
\begin{align}\label{eq:power-cell}
	\powercell_i(\mathbf{Y},\mathbf{w}) = \left\{ \mathbf{x} \in \mathcal{X} \mid \forall j, c(\mathbf{x},\mathbf{y}_i) - w_i \le c(\mathbf{x},\mathbf{y}_j) - w_j \right\}.
\end{align}
The \emph{Power Diagram} (or Laguerre Diagram) is then the union of all power cells, which is a partition of $\mathcal{X}$.
We now restrict our attention to the case in which the transport cost is the squared Euclidean distance, so $c(\mathbf{x},\mathbf{y}_i) = \lvert|\mathbf{x}-\mathbf{y}_i\rvert|^2$.
The dual Kantorovich formulation can be stated as a maximization of the following energy functional:
\begin{align}\label{eq:energy-functional}
	\begin{split}
	E(\mathbf{Y},\mathbf{w}) &= \sum\limits_{i=1}^N \int\limits_{\mathrlap{\powercell_i(\mathbf{Y},\mathbf{w})}} \rho(\mathbf{x}) (\lvert|\mathbf{x}-\mathbf{y}_i\rvert|^2 - w_i)\,\mathrm{d}\mathbf{x} + \sum\limits_{i=1}^N\nu_i w_i.
	\end{split}
\end{align}
where we have replaced the mass $\mathrm{d}\mu(\mathbf{x}) = \rho(\mathbf{x})\mathrm{d}\mathbf{x}$ in terms of the prescribed density $\mathbf{\rho}(\mathbf{x})$.

\subsection{Optimizing the transport map}
In this paper, we will consider two settings: (1) we want to optimize the sites $\mathbf{Y}$ to minimize $E$ and (2) we want to optimize the weights $\mathbf{w}$ so as to maximize $E$.
The former problem is known as \emph{quantization}; a notable algorithm is \emph{Lloyd relaxation}~\cite{Lloyd_1982,Du_1999_CVT_Algorithm_Application} in which sites are iteratively moved to the centroids of their associated cells.
That is, at each iteration of Lloyd relaxation, each site $\mathbf{y}_i$ is updated to it's centroid $\mathbf{c}_i$:
\begin{align}\label{eq:lloyd-relaxation}
	\mathbf{c}_i = \frac{ \int\limits_{\mathrlap{V_i(\mathbf{Y})}} \rho(\mathbf{x}) \mathbf{x}\,\mathrm{d}\mathbf{x} } { \int\limits_{\mathrlap{V_i(\mathbf{Y})}} \rho(\mathbf{x})\,\mathrm{d}\mathbf{x} }.
\end{align}
Note that we have replaced $\powercell_i(\mathbf{Y},\mathbf{w}) = V_i(\mathbf{Y}) = \powercell_i(\mathbf{Y},\mathbf{0})$ - i.e. we recover the Voronoi diagram when the weights are all zero.
Lloyd relaxation is slow to converge, thus a gradient-based method is often employed, which requires the gradient of $E$ with respect to the sites $\mathbf{y}_i$:
\begin{align}\label{eq:de-dx}
	\frac{\mathrm{d}E}{\mathrm{d}\mathbf{y}_i} = 2 m_i (\mathbf{y}_i - \mathbf{c}_i),
\end{align}
where $m_i$ is the mass of the cell (the denominator in Eq.~\ref{eq:lloyd-relaxation}).
Some Newton-based methods have been proposed, which requires the computation of the Hessian $\mathrm{d}^2E/\mathrm{d}\mathbf{y}_i\mathrm{d}\mathbf{y}_j$~\cite{Levy_2018}.
For simplicity, we use a quasi-Newton approach, specifically the L-BFGS method~\cite{Liu_1989,Johnson_NLOPT}.

In the second setting, we wish to maximize Eq.~\ref{eq:energy-functional} by optimizing the $N$ weights $\mathbf{w}$.
We can treat this as a minimization of $-E(\mathbf{Y},\mathbf{w})$, and use the derivatives:
\begin{align}\label{eq:de-dw}
	\frac{\mathrm{d}E}{\mathrm{d}w_i} = \nu_i - \int\limits_{\mathrlap{\powercell_i(\mathbf{Y},\mathbf{w})}} \rho(\mathbf{x})\,\mathrm{d}\mathbf{x} = \nu_i -  m_i,
\end{align}
where $\nu_i$ is the target (prescribed) mass of the power cell $\powercell_i(\mathbf{Y},\mathbf{w})$.
Since the energy is concave, it admits a unique maximizer~\cite{Levy_2018}.
Previous works have analyzed Newton-based approaches~\cite{Kitagawa_2016} for optimizing the weights, as well as multi-scale methods for large scale transport problems~\cite{Merigot_2011}.
These approaches have been restricted to two- and three-dimensions due to existing software implementations for computing power diagrams~\cite{Levy_2016_Geogram,Rycroft_2009,CGAL_software,CGAL_Voronoi}.
Here we strive to compute power diagrams in any dimension to support the solution of higher-dimensional semi-discrete optimal transport problems.

\section{Power diagrams in higher dimensions}
\label{sec:algorithm}
Given $N$ input sites $\mathbf{Y} = \{ \mathbf{y}_i \mid \mathbf{y}_i \in \mathbb{R}^d \}_{i = 1}^N$ and some weights $\mathbf{w} \in \mathbb{R}^N$ (a scalar weight is associated with each site), our goal is to compute the power diagram (with cells defined by Eq.~\ref{eq:power-cell}) \emph{restricted} to the domain $\mathcal{X} \subseteq \mathbb{R}^d$.
This domain can either be specified as a mesh of $d$-polytopes or $d$-simplices, therefore, the term ``restricted" means that we intersect $\mathcal{X}$ with $\powercell_i(\mathbf{Y},\mathbf{w})$ for each power cell.

Existing software implementations have been restricted to two- and three-dimensions and use geometric algorithms such as Sutherland-Hodgman re-entrant clipping~\cite{Sutherland_1974} to compute the intersection of a polygon or polyhedron with a Voronoi bisector.
The latter is needed in order to compute the vertex coordinates of a clipped polyhedron so as to apply the security radius theorem~\cite{Levy_2013_Vorpaline} and determine if clipping should be terminated.
Here, we introduce an alternative view of the clipping procedure, which extends to any dimensional domain $\mathcal{X}$.
Before describing our algorithm, we need a ``simple" result from polytope theory.

\subsection{Voronoi polytopes are simple}

Our algorithm relies on the fact that Voronoi polytopes are \textit{simple}, since they are the duals of Delaunay simplices.
This enables the conversion between a facet-based and vertex-based representation of a Voronoi polytope.
A convex $d$-dimensional polytope $\powercell$ can be described as the intersection of $m$ halfspaces.
Thus each point in the polytope can be described as the bounded solution set of $m$ linear inequalities~\cite{Henk_2004_Basic_properties_of_convex_polytopes}
\begin{align}
	\mathcal{H}(\powercell) = \{ \mathbf{x} \in \mathbb{R}^d \mid \mathbf{A}^T(\mathbf{x} - \mathbf{x}_0) \le \mathbf{0} \},
\end{align}
where each column of $\mathbf{A} \in \mathbb{R}^{d\times m}$ represents the normal to a facet.
This is known as the \textit{H-Representation} or \textit{HRep} for short, denoted by $\mathcal{H}$.
The \textit{vertex enumeration problem} consists of converting the HRep to the \textit{V-Representation} (\textit{VRep}, denoted by $\mathcal{V}$), which is a description of the polytope as the convex hull of its $n$ vertices:
\begin{align}
	\mathcal{V}(\powercell) = \mathrm{conv}(\mathbf{V}),
\end{align}
where $\mathbf{V} \in \mathbb{R}^{d \times n}$ is a matrix with each vertex $\mathbf{v}_i \in \mathbb{R}^d$ stacked columnwise.

In the following, assume we have information regarding which facets are incident to every vertex, which is stored in the \textit{vertex-facet-incidence matrix}, and will be denoted as $\mathbf{F}(\cdot) : \mathbb{N} \to \mathbb{Z}^k$, for some $k \ge 0$.
Observe that a vertex $v$ is identified using a nonnegative integer, whereas a facet $b$ is labeled as an integer.
This enables the distinction between a Voronoi bisector ($b \ge 0$) between two sites, and a mesh facet ($b < 0$), which is a $(d-1)$-dimensional facet of the input mesh.
In order to compute the intersection of a $d$-polytope with a halfspace, we need the following definition from Henk~\cite{Henk_2004_Basic_properties_of_convex_polytopes}.

\begin{definition}[\textit{simple polytope}]
  A $d$-polytope $\powercell$ is said to be \emph{simple} if every vertex is incident to exactly $d$ facets.
  A property of simple polytopes is that their dual polytopes are simplices.
\end{definition}

Thus, for Voronoi polytopes, $\mathbf{F}(\cdot) : \mathbb{N} \to \mathbb{Z}^d$ (i.e. $k = d$).
We then have the following result, which is critical for our intersection algorithm.
\begin{corollary}
The edges $\mathcal{E}(\powercell)$ can be identified from the following relation on $\mathcal{V}(\powercell)$:
\begin{equation}\label{eq:polytope-edges}
  \mathcal{E}(\powercell) = \left\{ e = (v_0,v_1) ~\big|~\lvert \mathbf{F}(v_0) \cap \mathbf{F}(v_1)\rvert = d-1 \right\}.
\end{equation}
\end{corollary}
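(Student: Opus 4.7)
The plan is to prove the two inclusions between the edge set and the set on the right-hand side of Eq.~\ref{eq:polytope-edges}, relying only on the defining property of simple polytopes (every vertex lies in exactly $d$ facets) together with the standard fact that, at such a vertex, the intersection of any $k$ of the facets through it is a $(d-k)$-face. I would cite this fact from Henk~\cite{Henk_2004_Basic_properties_of_convex_polytopes} or Ziegler~\cite{Ziegler_1995_Lectures_on_Polytopes} rather than reprove it. In particular, a face of codimension one (a facet) is cut out by one linear equation, an edge is cut out by $d-1$ linear equations, and a vertex by $d$, all of them corresponding to tight facet inequalities.

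For the forward direction, suppose $e = (v_0, v_1)$ is an edge of $\powercell$. Because $e$ is a 1-face, it lies in the relative interior of the intersection of some collection of facets whose affine hull has dimension one; by the codimension count, this collection has exactly $d-1$ facets, and every point of $e$ (in particular $v_0$ and $v_1$) is incident to all of them. Hence $\mathbf{F}(v_0) \cap \mathbf{F}(v_1)$ contains at least $d-1$ common facets. If it contained all $d$, then $v_0$ and $v_1$ would each equal the unique intersection of these $d$ facets (by simplicity, this intersection is a single vertex), so $v_0 = v_1$, contradicting that they form an edge. Thus $|\mathbf{F}(v_0) \cap \mathbf{F}(v_1)| = d-1$.

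For the reverse direction, suppose $|\mathbf{F}(v_0) \cap \mathbf{F}(v_1)| = d-1$. Let $F_1,\dots,F_{d-1}$ denote the common facets and let $F = F_1 \cap \cdots \cap F_{d-1}$. Since $v_0$ lies in $d$ facets, $d-1$ of which are $F_1,\dots,F_{d-1}$, the simplicity property gives that $F$ is a face of dimension exactly $d-(d-1)=1$, i.e.\ an edge of $\powercell$. Both $v_0$ and $v_1$ lie in $F$, and an edge has exactly two vertices (its two endpoints), so $\{v_0, v_1\}$ must be these endpoints; hence $(v_0, v_1) \in \mathcal{E}(\powercell)$.

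The only real subtlety is the reverse direction: one must know not merely that $v_0$ and $v_1$ both lie in a common $(d-1)$-fold facet intersection, but that this intersection is in fact one-dimensional. This is exactly where simplicity is used; without it the intersection of $d-1$ facets could be of higher dimension and carry more than two vertices, and the stated characterization would fail. Everything else is a bookkeeping exercise in codimensions.
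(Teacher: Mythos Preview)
Your proof is correct and, in fact, more explicit than the paper's own justification. The paper does not give a formal double-inclusion argument; instead it argues via polytope duality: because a simple polytope has a simplicial dual, the full face lattice of $\powercell$ can be recovered by reversing the face lattice of the dual, and in particular the edges of $\powercell$ are read off from the $(d-2)$-faces of that dual. From this the characterization in terms of shared facets follows. Your route is more elementary and direct: you work with the facet-incidence sets $\mathbf{F}(\cdot)$ and the standard structural fact about simple polytopes (any $k$ of the $d$ facets through a vertex meet in a $(d-k)$-face), and verify both inclusions explicitly. The advantage of your approach is that it is self-contained and pinpoints precisely where simplicity enters (the reverse inclusion, to force the common intersection of $d-1$ facets to be one-dimensional); the paper's duality sketch is terser but leans more heavily on background from \cite{Henk_2004_Basic_properties_of_convex_polytopes}.
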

The result is certainly true for $d \le 4$~\cite{Henk_2004_Basic_properties_of_convex_polytopes}, but is important to consider for higher-dimensions.
For \textit{simple} polytopes, the dual of $\powercell$ is a simplex (the Delaunay simplex) from which the entire set of facets is trivially constructed.
Furthermore, the hierarchy of the facets of a polytope can be obtained from the corresponding facet hierarchy of its dual~\cite{Henk_2004_Basic_properties_of_convex_polytopes}, which contains the edges.
Therefore, the edges $\mathcal{E}(\powercell)$ of a simple $d$-polytope $\powercell$ can be derived purely from the vertex-facet-incidence matrix of $\powercell$ by traversing the facet-hierarchy of the dual simplex in reverse order.
Thus we can determine if there is an edge between two vertices in the VRep of $\powercell$ if they share $d-1$ common facets.
\begin{figure}
	\centering
	\resizebox{0.45\textwidth}{!} {
		\includegraphics{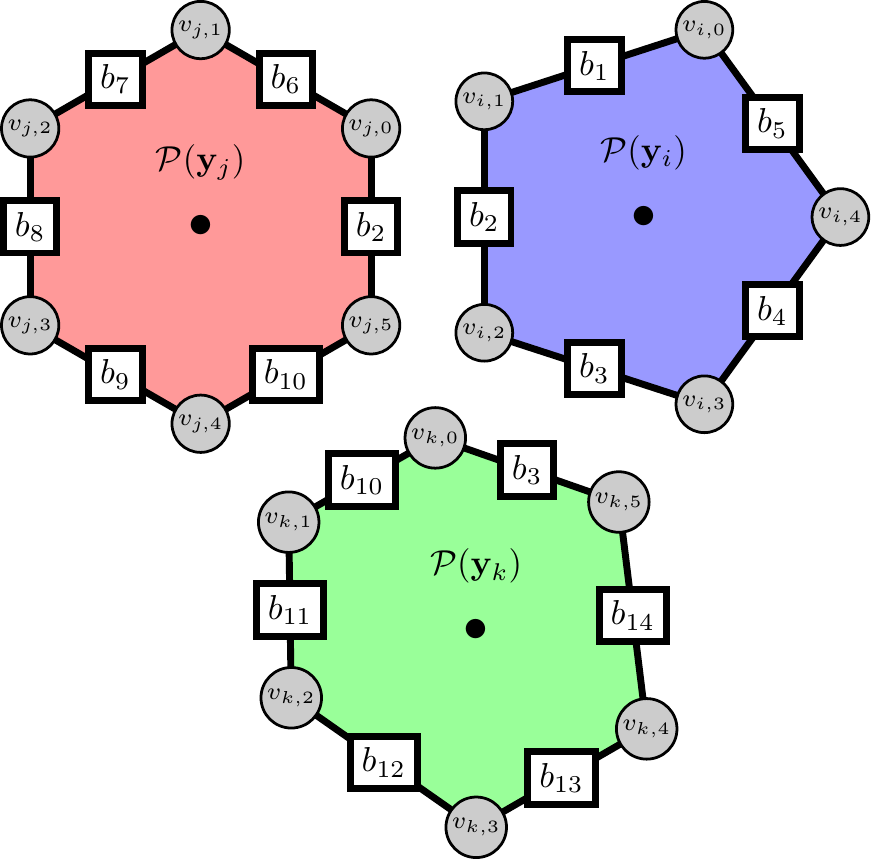}
		}
	\caption{Computing the edges of a polytope from the vertex-facet-incidence relations.
	Three Voronoi cells are shown in the different colors and bisectors (possibly shared) are labeled in the squares.
	}
	\label{fig:incidence-example}
\end{figure}

For example, in Fig.~\ref{fig:incidence-example}, $\mathbf{F}(v_{i,1}) = \{b_1,b_2\}$ and $\mathbf{F}(v_{i,2}) = \{b_2,b_3\}$.
Since vertices $v_{i,1}$ and $v_{i,2}$ share a single common facet ($b_2$), then they also share an edge.
However, vertices $v_{j,3}$ and $v_{j,1}$ do not share an edge because $\mathbf{F}(v_{j,3}) \cap \mathbf{F}(v_{j,1}) = \emptyset$.
Furthermore, we can identify that vertices $v_{i,2}$, $v_{j,5}$ and $v_{k,0}$ are all symbolically equivalent because they share a common bisector, thus enabling these vertices to be merged as a single vertex if necessary.
In fact, these vertices represent the Delaunay simplex between sites $\mathbf{y}_i$, $\mathbf{y}_j$ and $\mathbf{y}_k$.

Each Voronoi cell will be computed from the intersection of a finite number of halfspaces, thus it is important to ensure that each cell remains a simple polytope after intersecting it with a halfspace, so that we can iteratively apply Eq.~\ref{eq:polytope-edges} for each bisector.
\begin{proposition}
  A simple $d$-polytope $\powercell \subset \mathbb{R}^d$ intersected with a halfspace $\mathcal{H}^+$ produces a simple polytope.
\end{proposition}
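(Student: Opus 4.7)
The plan is to classify the vertices of $P' = \powercell \cap \mathcal{H}^+$ into two types and verify the simplicity condition at each. Let $H$ denote the bounding hyperplane of $\mathcal{H}^+$. As is standard when discussing combinatorial types of simple polytopes, I would first reduce to the generic case in which $H$ contains no vertex of $\powercell$ (any pathological configuration can be resolved by an infinitesimal perturbation of the weight defining $H$, and in the Voronoi/power setting this corresponds to nondegenerate site configurations).

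Under this assumption, every vertex of $P'$ falls into one of two disjoint classes:
\begin{itemize}
\item[(a)] vertices $v$ of $\powercell$ lying strictly in the open halfspace bounded by $H$;
\item[(b)] points $v' = e \cap H$, where $e$ is an edge of $\powercell$ having one endpoint on each side of $H$.
\end{itemize}
I would first justify this classification: any vertex of $P'$ is an extreme point of $\powercell \cap \mathcal{H}^+$, and thus either already extreme in $\powercell$ (case (a)) or else lying on $H$ and extreme within the section $\powercell \cap H$, which forces it to sit on an edge of $\powercell$ crossing $H$ (case (b)).

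For case (a), the $d$ facets incident to $v$ in $\powercell$ each contain $v$, are cut by $\mathcal{H}^+$ into nonempty subsets that still contain $v$, and therefore remain the $d$ facets of $P'$ incident to $v$; the newly introduced facet $P' \cap H$ does not contain $v$ by the general position assumption. Hence $v$ is incident to exactly $d$ facets of $P'$. For case (b), I would invoke the simplicity of $\powercell$ directly: since $\powercell$ is simple, the edge $e$ is the intersection of exactly $d-1$ facets of $\powercell$ (this is the dual statement that at a simple vertex, the link is a $(d-1)$-simplex, so each edge sits in $d-1$ facets). The point $v' = e \cap H$ inherits membership in these $d-1$ facets and additionally lies on the new facet cut out by $H$, yielding exactly $d$ facets through $v'$.

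The main obstacle I anticipate is not the incidence count itself, which is almost immediate from the definitions, but rather the verification that (i) the classification of vertices is complete, and (ii) the set $P' \cap H$ is indeed a single facet of $P'$ rather than a lower-dimensional face or an empty set — i.e.\ that $H$ genuinely clips $\powercell$. Both are handled by the generic-position hypothesis: nonemptiness of $P' \cap H$ in the nontrivial case follows from $H$ separating at least one vertex of $\powercell$ from another along some edge, and dimensionality of $P' \cap H$ as a facet follows because $H$ supports $P'$ and meets its interior. With those two points established, the incidence argument above closes the proof.
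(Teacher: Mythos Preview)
Your proof is correct and follows essentially the same route as the paper: both arguments split the vertices of $P'$ into surviving vertices of $\powercell$ and new intersection points on crossed edges, then count $d$ incident facets in each case (unchanged for the former, $d-1$ inherited from the edge plus the new hyperplane facet for the latter). Your write-up is in fact somewhat more careful than the paper's, since you make the generic-position assumption explicit and justify both the completeness of the vertex classification and the claim that $P'\cap H$ is a genuine facet, whereas the paper leaves these points implicit.
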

\begin{proof}
  Assuming the intersection of $\powercell$ with $\mathcal{H}^+$ is non-empty, it suffices to show that every vertex of the new polytope will be incident to exactly $d$ facets.
  One way of describing the halfspace $\mathcal{H}^+$ is by using a unique point $\mathbf{x}_0$ and normal to the dividing hyperplane $\mathbf{n} \in \mathbb{R}^d$.
  Since $\powercell$ is simple, we can determine its edges $\mathcal{E}(\powercell)$.
  Every vertex of the original polytope $v \in \mathcal{V}(\powercell)$ can then be classified as to whether it is in the halfspace as
  \begin{equation}\label{eq:vertex-classify}
    \mathcal{V}^+(\powercell) = \left\{ v \in \mathcal{V}(\powercell)~\big|~(\mathbf{x}(v) - \mathbf{x}_0)\cdot \mathbf{n} > 0 \right\},
  \end{equation}
	where $\mathbf{x}(v) \in \mathbb{R}^d$ are the coordinates of vertex $v$.
  The vertices created from the intersection are then computed by intersecting each edge with the dividing hyperplane (note we can filter which edges are intersected by finding edges with one vertex in $\mathcal{V}^+(\powercell)$ and one that is not).
  Denote the set of intersection vertices as $\mathcal{S}$, therefore, the new vertices of the polytope $Q$ are $\mathcal{V}(Q) = \mathcal{V}^+(\powercell) \cup \mathcal{S}$.
  There are two cases to consider.
  First, the vertices $\mathcal{V}^+(\powercell)$ are clearly adjacent to $d$ facets since they were not affected by the intersection.
  Second, the vertices in $\mathcal{S}$ were each created from the intersection of an edge with the hyperplane.
  Since edges are adjacent to $d-1$ facets (Eq.~\ref{eq:polytope-edges}) and the dividing hyperplane, itself, defines a facet of $Q$, then each intersection vertex is adjacent to $d$ facets.
\end{proof}
We are now ready to describe our algorithm for computing power cells.
\subsection{Computing the power cells}
\label{sec:power-cells}
In fact, we simply compute the Voronoi cells!
As noted earlier, Aurenhammer observed that the power diagram is the intersection of a higher-dimensional Voronoi diagram with our $d$-dimensional space lifted to the same ambient dimension of this Voronoi diagram~\cite{Aurenhammer_1987,Aurenhammer_1991}.
We use the result of L\'evy~\cite{Levy_2015} to lift the Voronoi sites (equipped with weights) to a $(d+1)$-dimensional space, thereby obtaining $\mathbf{Z} \in \mathbb{R}^{(d+1)\times N}$ in which
\begin{equation}\label{eq:lift-sites}
	\mathbf{z}_i = \left(\mathbf{y}_i^T,\ \sqrt{\max(\mathbf{w}) - w_i}\right)^T, \quad \forall i = 1,2,\dots,N.
\end{equation}
We additionally need to lift our domain of interest to $\mathbb{R}^{d+1}$, which is done by simply appending a zero to the coordinates of the domain (the vertices of the mesh).
As a result, we only need to compute the Voronoi diagram in $\mathbb{R}^{d+1}$.
That is, the power cells are $\powercell_i(\mathbf{Y},\mathbf{w}) = V_i(\mathbf{Z})$.

Although the domains we study in this paper can be entirely described with a single polytope (a $d$-cube), we will consider the general case in which the domain to be clipped against is represented as a simplicial mesh.
The mechanics of the algorithm are the same, but this description lends well to future work for the interested reader.

We begin with a single element, here represented by a $d$-simplex $\kappa$.
The vertex-facet incidence relations for each vertex of $\kappa$ are the \textit{mesh facets} (the $(d-1)$-simplices) incident to a particular vertex.
As mentioned earlier, these mesh facets are labeled with negative integers so as to distinguish them from Voronoi bisectors.
Our goal is to compute the Voronoi cell associated with site $\mathbf{z}_i$, clipped with the element $\kappa$.
In the following description, please follow along with Fig.~\ref{fig:voronoi-clipping}.

\begin{figure*}[t]
	\centering
	\begin{subfigure}[b]{0.245\textwidth}
	\resizebox{0.99\textwidth}{!} {
	  \includegraphics{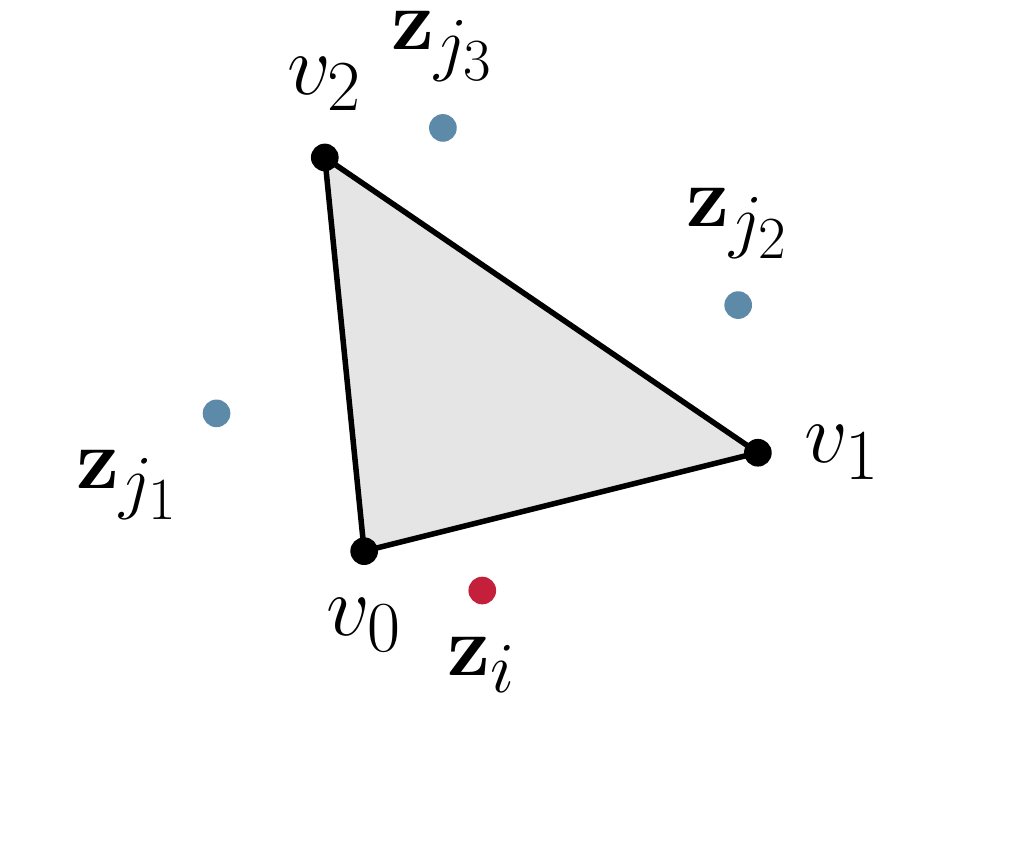}
	}%
	\caption{Initial simplex $\kappa$.}
	\end{subfigure}
	\begin{subfigure}[b]{0.245\textwidth}
	\resizebox{0.99\textwidth}{!} {
		\includegraphics{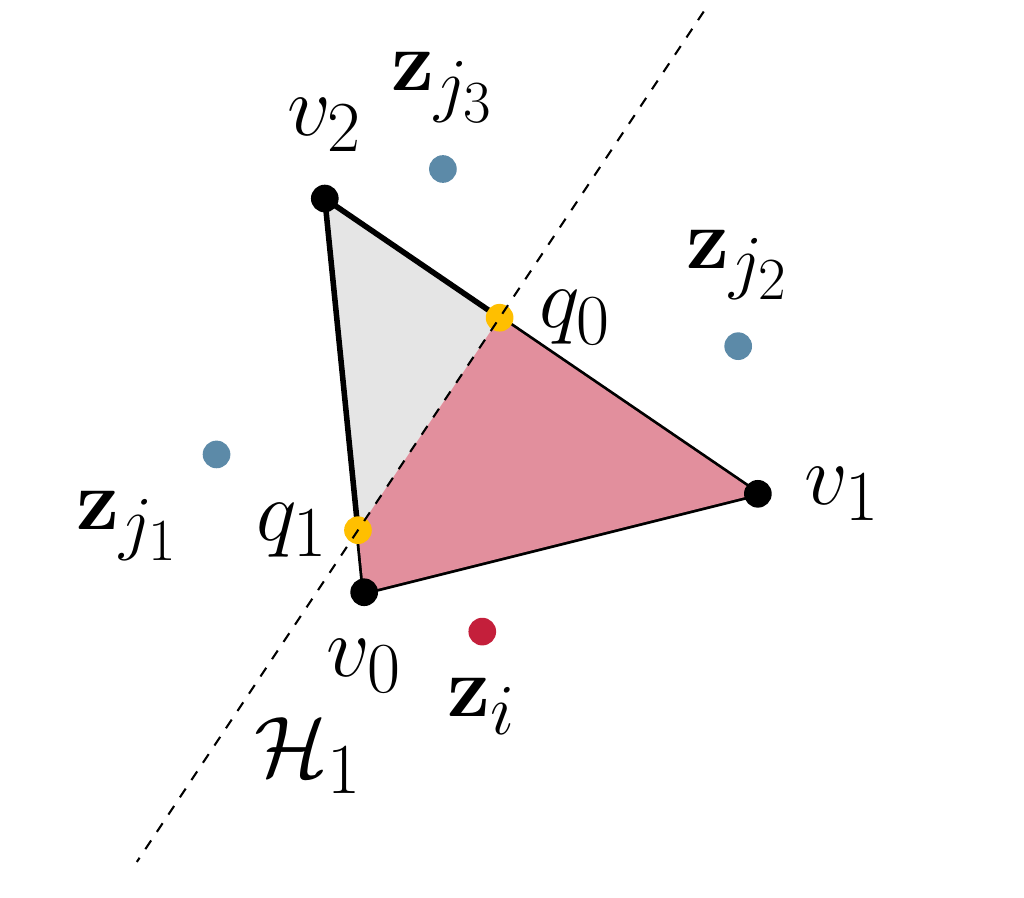}
	}%
	\caption{Clipping with $\mathcal{H}_1$.}
	\end{subfigure}
	\begin{subfigure}[b]{0.245\textwidth}
	\resizebox{0.99\textwidth}{!} {
		\includegraphics{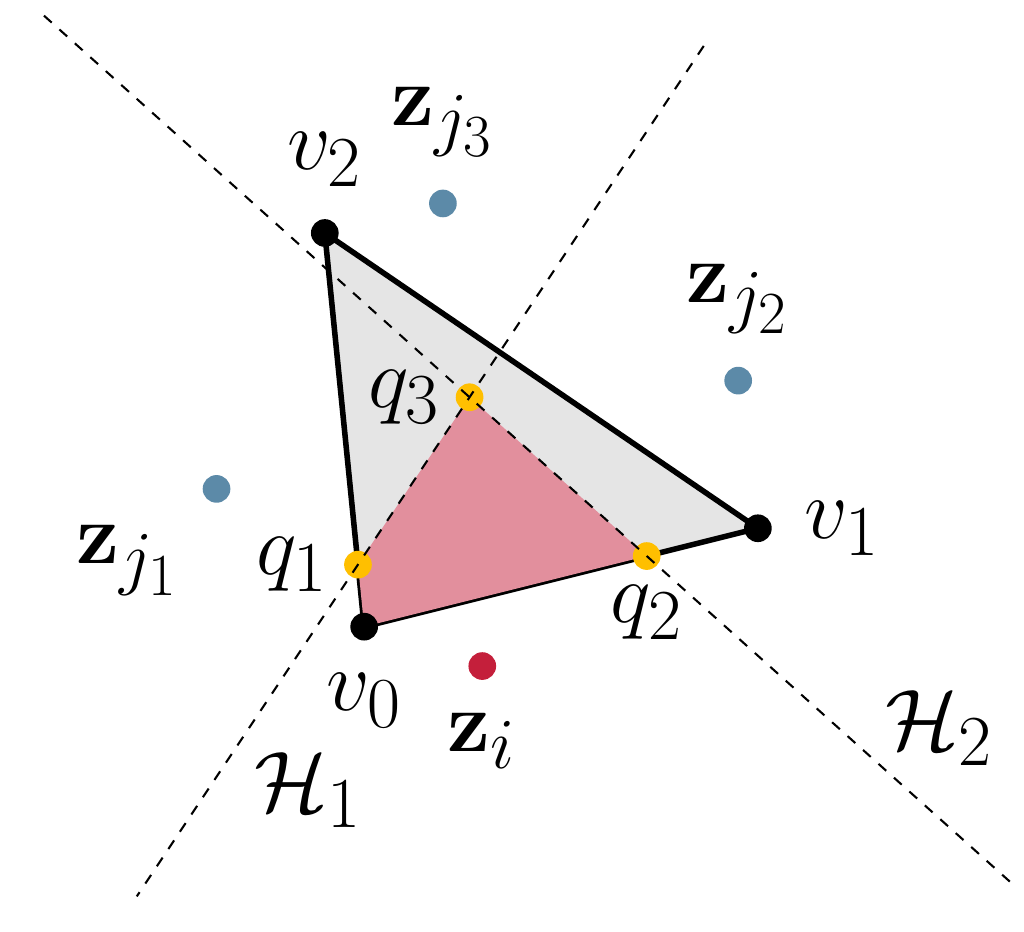}
	}%
	\caption{Clipping with $\mathcal{H}_2$.}
	\end{subfigure}
	\begin{subfigure}[b]{0.245\textwidth}
	\resizebox{0.99\textwidth}{!} {
		\includegraphics{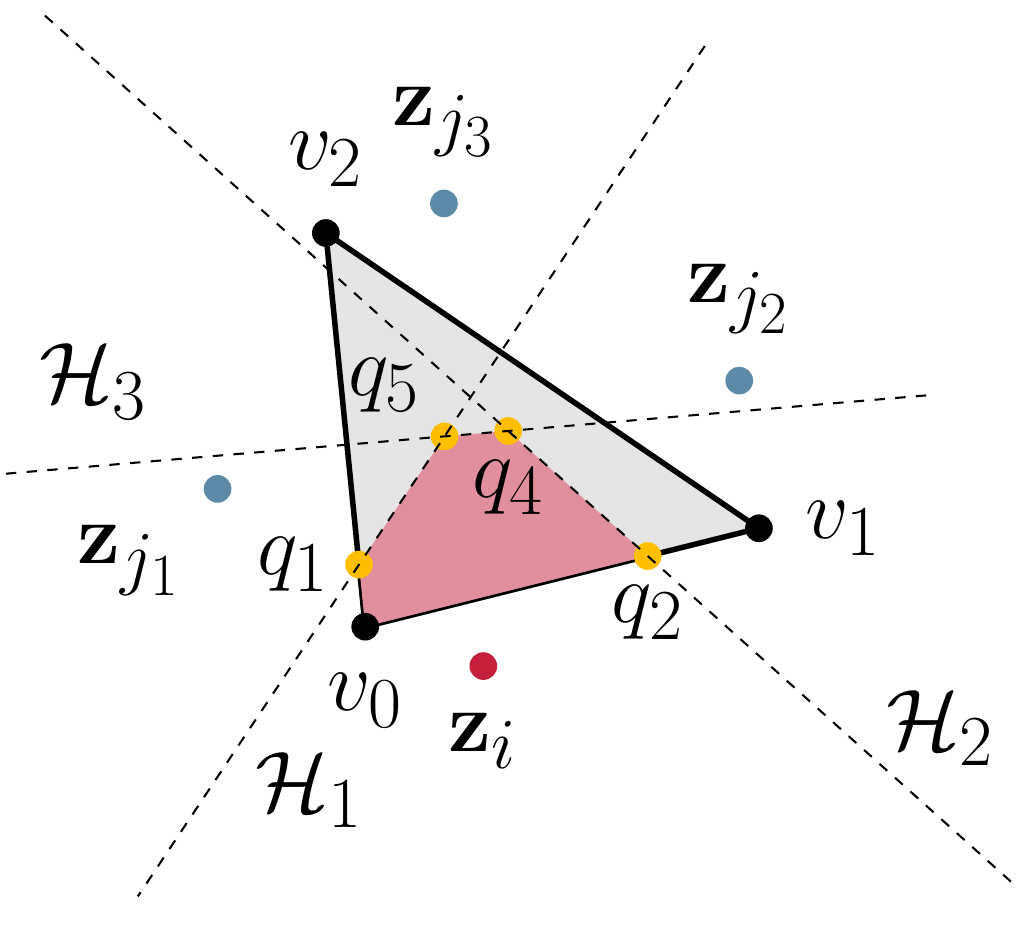}
	}%
	\caption{Clipping with $\mathcal{H}_3$.}
	\end{subfigure}
	\caption{Computing the intersection of a mesh element (here, a simplex) with a Voronoi cell defined by the site $\mathbf{z}_i$.
	Clipping starts with the Voronoi bisector $\mathcal{H}_1$ defined by the nearest neighbor to site $\mathbf{z}_i$ ($\mathbf{z}_{j_1}$) and proceeds to the next neighbors, thus clipping by $\mathcal{H}_2$ and then $\mathcal{H}_3$ from left to right.
	}
	\label{fig:voronoi-clipping}
\end{figure*}

We begin by clipping $\kappa$ with the Voronoi bisector $\mathcal{H}_1$ defined by $\mathbf{z}_i$ and it's nearest neighbor, $\mathbf{z}_{j_1}$.
In order to determine the clipped polytope (in red), we first identify the edges of $\kappa$ using Eq.~\ref{eq:polytope-edges}.
Next, we determine which edges have vertices on either side of the Voronoi bisector $\mathcal{H}_1$.
Those with vertices on either side of $\mathcal{H}_1$ are then intersected with $\mathcal{H}_1$, creating new vertices $q_0$ and $q_1$.
We then set $\mathbf{F}(q_0) = \mathbf{F}(v_1) \cap \mathbf{F}(v_2) \cup \{b_{i,j_1}\}$, where $b_{i,j_1}$ is the unique integer label assigned to the Voronoi bisector between sites $i$ and $j_1$.
Furthermore, $\mathbf{F}(q_1) = \mathbf{F}(v_0) \cap \mathbf{F}(v_2) \cup \{ b_{i,j_1} \}$.
The geometric coordinates of $\mathbf{q}_0$ and $\mathbf{q}_1$ are computed and used to check if clipping should be terminated, using the radius of security theorem~\cite{Levy_2013_Vorpaline}.

If clipping should continue, we then proceed to the next nearest neighbor of site $\mathbf{z}_i$, which is $\mathbf{z}_{j_2}$.
A similar procedure ensues, this time, starting with the polytope defined by vertices $\{v_1,q_0,q_1,v_2\}$.
Extracting the edges (Eq.~\ref{eq:polytope-edges}) and intersecting those with vertices on either side of $\mathcal{H}_2$ reveals that $q_0$ is no longer in the Voronoi cell.
New vertices $q_2$ and $q_3$ are introduced from the intersection of the edges with $\mathcal{H}_2$.
Then $\mathbf{F}(q_2) = \mathbf{F}(v_0) \cap \mathbf{F}(v_1) \cup \{b_{i,j_2}\}$ and $\mathbf{F}(q_3) = \mathbf{F}(q_0) \cap \mathbf{F}(q_1) \cup \{b_{i,j_2}\}$.

Proceeding in a similar fashion to the next nearest neighbor $\mathbf{z}_{j_3}$ clips the current polytope defined by vertices $\{v_0,q_2,q_3,q_1\}$ (no need for them to be ordered counterclockwise).
We then end up with the polytope $\{v_0,q_2,q_4,q_5,q_1\}$ where $\mathbf{F}(q_4) = \mathbf{F}(q_2) \cap \mathbf{F}(q_3) \cup \{b_{i,j_3}\}$ and $\mathbf{F}(q_5) = \mathbf{F}(q_1) \cap \mathbf{F}(q_3) \cup \{b_{i,j_3}\}$.
We suppose the clipping terminates after checking the coordinates of this polytope with the radius of security theorem.

This procedure is described in Algorithm~\ref{alg:polytope-clipping}.
The inputs to the algorithm are a Voronoi site $\mathbf{z}_i$, a mesh element $\kappa$ and a nearest neighbor structure $\mathcal{N}$ that computes the nearest sites upon request.
This structure is initialized to $50$ nearest neighbors at the start of clipping - we expect well distributed mesh vertices to have a simplex valency of $6$ in $2d$, $15-20$ in $3d$ and about $120$ in $4d$~\cite{Caplan_2019_PhD}.
Whenever the bounds of the nearest neighbors are reached, an additional $10$ neighbors are appended to the current neighbors.
As we will examine in the results section, the cost of computing the nearest neighbors is overshadowed by clipping, simplex decomposition and numerical integration.

Clipping continues until the security radius is reached, which is equal to twice the maximum distance from any vertex in $\powercell$ to $\mathbf{z}_i$ (we absorb the factor of 2 into the definition of the radius of security - a slight modification of L\'evy's definition).
Each edge is then intersected with the bisector defined by sites $\mathbf{z}_i$ and $\mathbf{z}_j$ - the $j^{\mathrm{th}}$ nearest neighbor to $\mathbf{z}_i$.
When an intersection occurs, only one of $e_0$ or $e_1$ (the edge vertices) is in the halfspace defined by $\mathcal{H}$ which includes $\mathbf{z}_i$ (i.e. $\mathcal{H}^+$).
This vertex, along with the intersection vertex is then appended to the set of vertices defining the clipped polytope $Q$, and the vertex-facet incidence relations are updated.

In the current work, we parallelize the clipping procedure over the Voronoi sites in contrast to a parallelization over the mesh elements.
The reason is because we keep our domains are defined by a single $d$-cube but we seek power diagrams defined by sometimes millions of Voronoi sites.
Furthermore, we parallelize the computation on the CPU, specifically with \texttt{OpenMP}, however the computation is well suited for parallelization on the GPU, which will be explored as future work.

\begin{algsimple}[h!]
	\begin{algcode}[computeVoronoiCell]
		\zi \Inputs site $\mathbf{z}_i$, element $\kappa$, neighbors $\mathcal{N}$
		\zi \Outputs Voronoi cell as a $d$-polytope $\powercell$
		\zi
    \li $\powercell \gets \kappa$\,\,\Comment{initialize to domain element}
    \li $j = 1$\,\,\Comment{start with nearest neighbor}
		\li $\mathbf{z}_j = \mathcal{N}(\mathbf{z}_i,j)$
    \li \While $\lvert|\mathbf{z}_i-\mathbf{z}_j\rvert| < \texttt{securityRadius}(\mathbf{z}_i,\powercell)$ \addindent
      \li $\mathcal{H} \gets \mathcal{H}(\mathbf{z}_i,\mathbf{z}_{j})$\,\,\Comment{define bisector}
      \li $Q \gets \emptyset$\,\,\Comment{initial clipped polytope}
      \li $\mathcal{E} \gets \mathcal{E}(\powercell)$\,\,\Comment{from Eq.~\ref{eq:polytope-edges}}
      \li \For $e = (e_0,e_1) \in \mathcal{E}$
        \li $s_0 = \texttt{side}(e_0,\mathcal{H})$\Label{algline:side0}
        \li $s_1 = \texttt{side}(e_1,\mathcal{H})$\Label{algline:side1}
        \li \If $s_0 \equiv s_1$\addindent\,\,\Comment{no intersection}
        \li \Continue\remindent
        \li $p \gets e_0$ \Or $e_1$ (whichever has $s > 0$)
        \li $\mathbf{q} \gets e \cap \mathcal{H}$ and set $q \gets \lvert Q\rvert$
        \li $\mathbf{F}(q) \gets \mathbf{F}(e_0) \cap \mathbf{F}(e_1) \cup b(\mathcal{H})$\label{algline:voronoi-facet-label}
        \li $Q \gets Q \cup \{p ,q\}$ \Label{algline:voronoi-append-vertices}
      \End
      \li $\powercell \gets Q$\,\,\Comment{update current polytope}
      \li $j = j +1$ \Comment{proceed to next neighbor}
			\li $\mathbf{z}_j = \mathcal{N}(\mathbf{z}_i,j)$
      \remindent
    \End
	\end{algcode}
	\caption{Computing the intersection of a mesh element $\kappa$ with a Voronoi cell defined by a site $\mathbf{z}_i$.}
	\label{alg:polytope-clipping}
\end{algsimple}

Eq.~\ref{eq:polytope-edges} suggests that extracting the polytope edges is quadratic in the number of vertices in the current clipped polytope.
Since the number of vertices in a Voronoi polytope is exponential in the dimension of the polytope, this could incur a significant computational cost - especially since this extraction is repeatedly performed until the radius of security is reached.
Therefore, we reduce the computational cost by (1) retaining any edges that lie entirely within $\mathcal{H}^+$ (the side of the hyperplane containing the site $\mathbf{z}_i$), (2) updating the endpoints of any intersected edges with indices of the new intersection vertices and (3) computing new edges that lie exactly on $\mathcal{H}$ using Eq.~\ref{eq:polytope-edges}.
Step (3) is still quadratic in the number of vertices that lie on $\mathcal{H}$, but since these define a $(d-1)$-dimensional polytope, the cost remains reasonable for our applications ($d \le 6$).

\paragraph{A note regarding exactness}
In this paper, we are purely interested in the \textit{geometry} of the power cells and do not extract any \textit{topological} information corresponding to the dual Delaunay mesh.
Although this topological information could be extracted by merging vertices that have identical symbolic information (e.g. vertices $v_{i,2}$, $v_{j,5}$ and $v_{k,0}$ in Fig.~\ref{fig:incidence-example}), this computation is susceptible to numerical precision issues.
Exact geometric predicates, such as an \texttt{orientnd} function, similar to Shewchuk's \texttt{orient2d} and \texttt{orient3d} functions could be used to detect cosphericities~\cite{Shewchuk_1996_Adaptive_Precision}.
In the context of our algorithm, the calls to the \texttt{side} function on Lines~9 and 10 should be replaced with a \texttt{side\_rd} function~\cite{Levy_2016_PCK} which computes the intersection of a $r$-dimensional simplex with the bisector $\mathcal{H}$.
In order to determine which $r$-simplex should be used, we simply need to track which simplex facets of the original mesh are incident to each vertex.
This can be done by tracking the ``simplex vertices" of every clipped vertex: $\mathbf{S}(v)$.
Any time an intersection point is introduced, the new vertex inherits all the simplex vertices of the endpoint edge vertices.
For example, in the rightmost diagram of Fig.~\ref{fig:voronoi-clipping}, $\mathbf{S}(q_5) = \mathbf{S}(q_3) \cup \mathbf{S}(q_1) = \mathbf{S}(q_0) \cup \mathbf{S}(q_1) = \{v_0,v_1,v_2\}$, meaning $q_5$ is the intersection of the $d$-dimensional simplex $\kappa = (v_0,v_1,v_2)$ with the bisectors $\mathcal{H}_1 \cap \mathcal{H}_2 \cap \mathcal{H}_3$.

\subsection{Numerical integration}
Computing the objective function and gradients of Eqs.~\ref{eq:energy-functional},~\ref{eq:de-dx} and \ref{eq:de-dw} requires the calculation of integrals over the resulting Voronoi polytopes.
To perform the integration, we decompose the $d$-polytopes into a set of $d$-simplices by introducing a vertex at the centroid of the polytope and then recurse through the $(d-1)$-facets, continuously introducing vertices at these facet centroids until the edges are reached (at which point the triangulation is trivial).
The simplicial decomposition of the $d$-polytope is obtained by reconnecting the simplicial facets with the centroids.
This mesh of $d$-simplices is then used to perform the integration with numerical quadrature rules of Stroud~\cite{Stroud_1971_Approximate_calculation_of_multiple_integrals}.
Although this evaluation is also performed in parallel over the integration simplices, the integral evaluation is a bottleneck in our algorithm, which will be discussed in the results section.

\subsection{Visualizing the power diagram}

In the results section, we will present visualizations of four-dimensional power diagrams, thus it is necessary to briefly describe our procedure for doing so.
Some previous work in $4d$ mesh visualization includes the work of Caplan~\cite{Caplan_2019_PhD} and Belda-Ferr\'in~\cite{BeldaFerrin_2020}.

Our method for visualizing a $d$-dimensional mesh first consists of identifying which polytopes of this mesh are cut by some input viewing volume, represented as a $(d-1)$-dimensional hyperplane with point $\mathbf{x}_0$ and normal $\mathbf{n}$.
Next, each polytope $\powercell$ must be intersected with the hyperplane.
We cull polytopes that are not clipped if all their vertices lie on the same side of the hyperplane, using the predicate in Eq.~\ref{eq:vertex-classify}.
Otherwise, the polytope is clipped by the hyperplane.
We then identify which edges have vertices on either side of the hyperplane, and then compute the intersection point of the hyperplane with the edge.
Each intersection point is then appended to a list of vertices that lie \textit{exactly} on the hyperplane, which defines a $(d-1)$-dimensional polytope that can be directly visualized if a suitable visualization dimension (such as $2d$ or $3d$) is reached.

For high-dimensional ($d > 4$) applications, a recursive procedure can be applied to this $(d-1)$-dimensional mesh along with a $(d-2)$-dimensional hyperplane.
In the results section, we only visualize $4d$ meshes through (1) the intersection of a $3d$ viewing volume or (2) a second intersection with a $2d$ clipping plane.
In order to view the polyhedra resulting from the intersection, we tetrahedralize the convex polyhedra using a Delaunay triangulator~\cite{Si_2005_TetGen} and visualize the mesh edges by applying Eq.~\ref{eq:polytope-edges} to extract the edges of each polyhedron clipped to the $3d$ viewing volume.

\section{Performance}\label{sec:performance}
In this section we evaluate the performance of our algorithm for computing $d$-dimensional power diagrams (of $d$-polytopes) for $d = 2$ to $d = 6$.
All tests are performed with an Intel Xeon W-2145 CPU at 3.70GHz with 8 cores (16 threads).

Our first performance test consists of distributing $N$ points within a unit $d$-cube and measuring the time to compute the power diagram.
We consider two point distributions: (1) a completely random white noise distribution and (2) a blue noise distribution.
The reason we consider the former is because the nearest neighbor calculation is more costly for a white noise distribution of points, whereas the number of nearest neighbors is more uniform in the case of a blue noise distribution.
The blue noise sampling distributions were obtained with the \texttt{SpokeDarts} software of Mitchell et al.~\cite{Mitchell_2018}.
An example of the resulting Voronoi diagrams for four-dimensional white and blue noise point distributions, sliced along the fourth dimension, are shown in Fig.~\ref{fig:point-distributions}.

\begin{figure*}[h!]
	\centering
	\begin{subfigure}{0.475\textwidth}
		\includegraphics[width=0.8\textwidth]{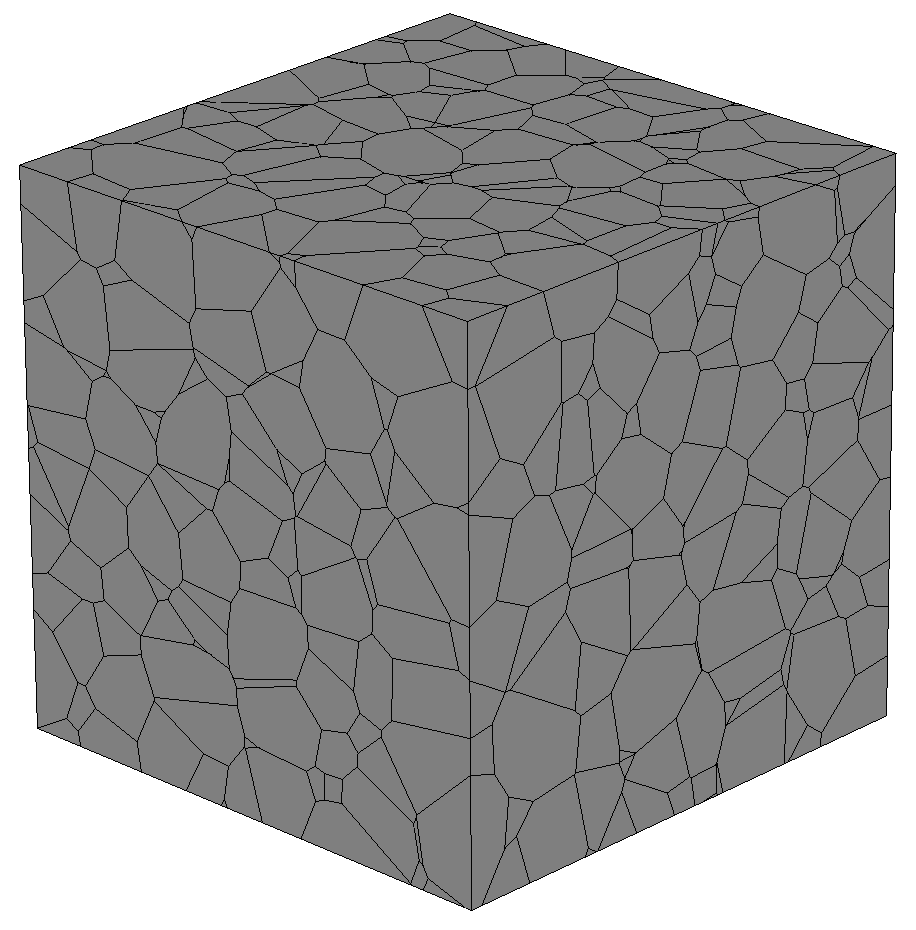}
		\caption{White noise.}
	\end{subfigure}
	\begin{subfigure}{0.475\textwidth}
			\includegraphics[width=0.8\textwidth]{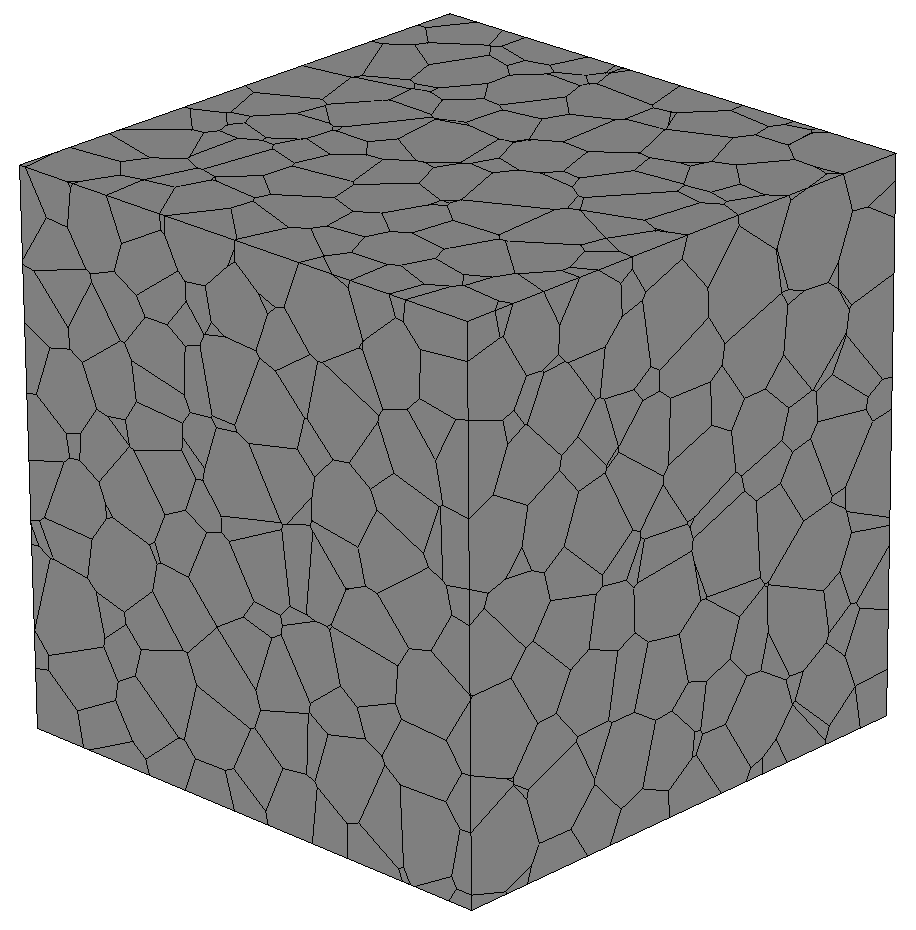}
			\caption{Blue noise.}
	\end{subfigure}
  \caption{Four-dimensional Voronoi diagrams for white and blue noise point distributions for $16,332$ sites embedded in $4d$.
	The images represent the slices of the Voronoi diagram along a hyperplane at $t = 0$ where $t$ is the fourth dimension.
	}
	\label{fig:point-distributions}
\end{figure*}

Performing the calculation in $2$- through $6$-dimensional domains (where the ambient dimension is equal to the topological dimension of the domain) reveals a good scaling of the algorithm in low dimensions, particular in $2d$ through $4d$.
Fig.~\ref{fig:cost2d-6d} reveals that our calculation is more costly for random point distributions than for blue noise distributions.
Also observe that the cost of computing the Voronoi diagram in $5d$ and $6d$ is significantly higher than in low dimensions, which is due to the fact that the number of vertices in a Voronoi polytope grows exponentially with the dimension.
Fig.~\ref{fig:vertices-facets-2d-6d} shows how the number of vertices and facets grows with the input number of points (Voronoi sites) for a white noise distribution.

\begin{figure}[h!]
	\centering
	\resizebox{0.4\textwidth}{!}{
	\begin{tikzpicture}
		\node at (0,0) {\includegraphics[width=0.5\textwidth]{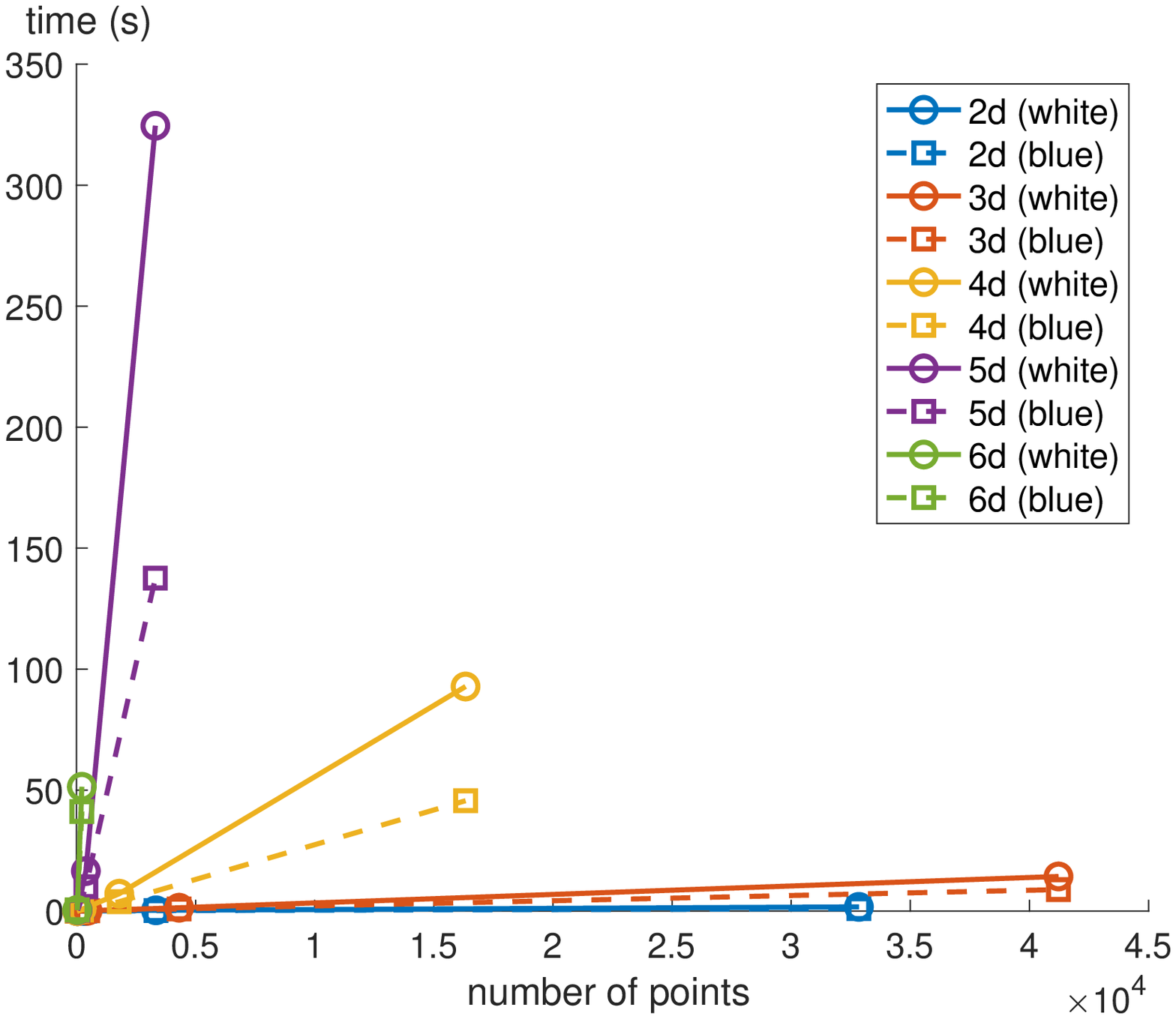}};
		\node at (1.7,1) {\includegraphics[width=0.3\textwidth]{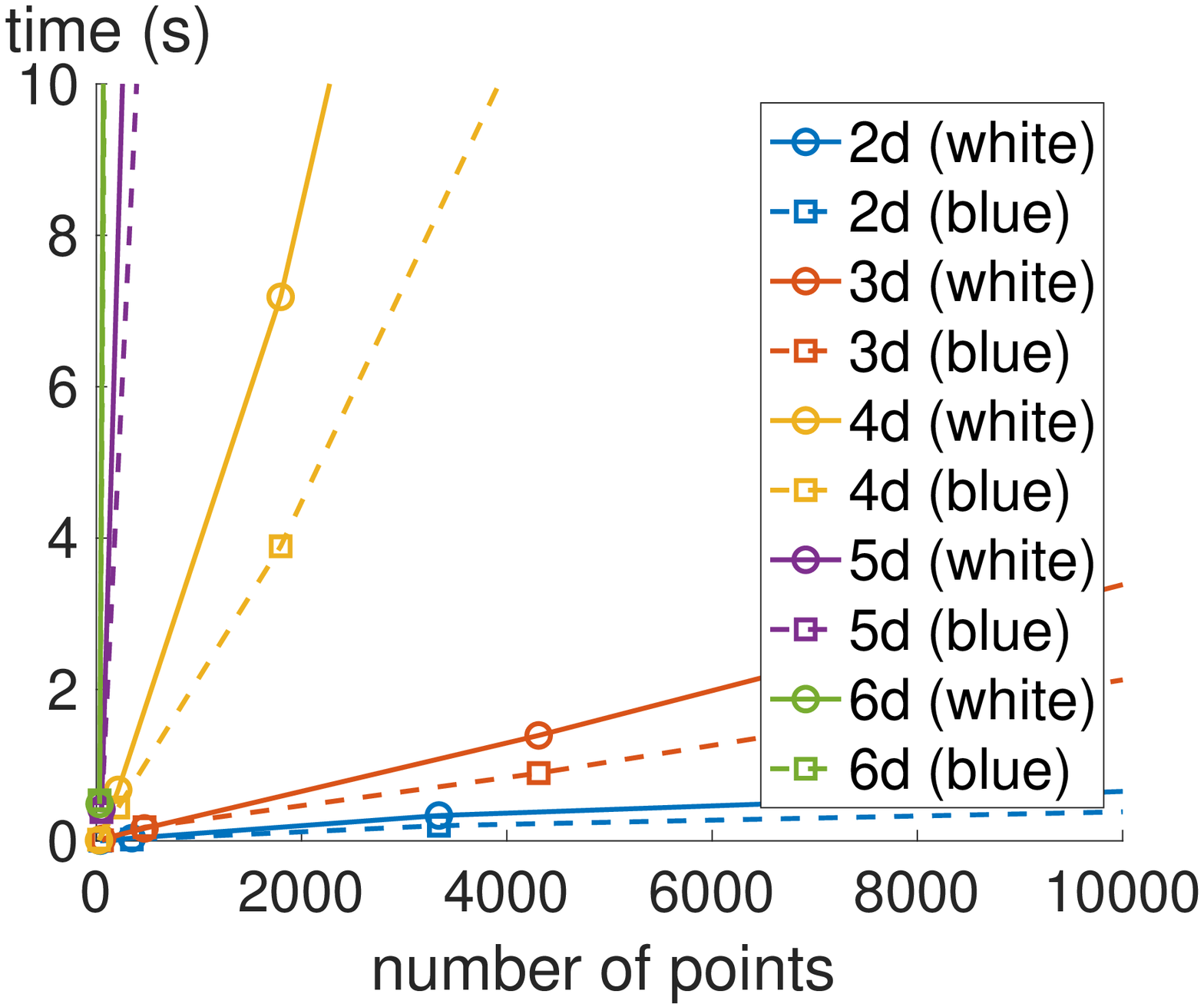}};
		\draw [line width=2pt,->,>=latex] (-1,-0.5) [out=150] to [in=60] (-2.5,-1.5);
		\draw [line width=2pt,red] (-4,-3) rectangle (-2,-2.5);
	\end{tikzpicture}
	}
	\caption{Time to compute the Voronoi diagram in $2d$-$6d$ for white and blue noise distributions with a varying number of points (Voronoi sites).}
	\label{fig:cost2d-6d}
\end{figure}

\begin{figure}[h!]
	\centering
	\includegraphics[width=0.4\textwidth]{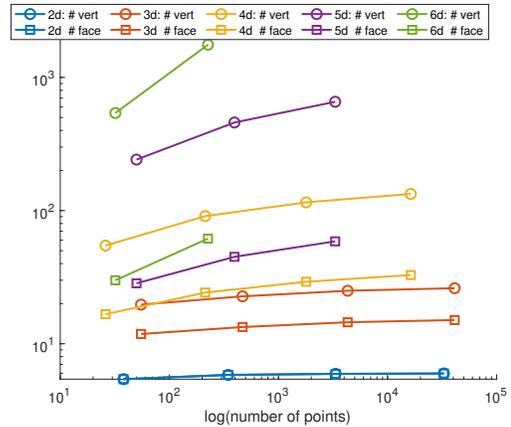}
	\caption{Total number of vertices (circles) and facets (squares) for Voronoi diagrams computed from a random point distribution in $2d-6d$ for a varying number of input points (Voronoi sites).}
	\label{fig:vertices-facets-2d-6d}
\end{figure}

In our second performance study, we focus on dimensions $2$-$4$ and compute (1) the time to clip the Voronoi cells (as in the previous study), (2) the total time devoted to compute nearest neighbors, (3) the time devoted to decomposing the polytopes into integration simplices and (4) the time to perform the numerical integration for various quadrature orders $q$.
The results are shown in Table~\ref{tab:performance-2d-4d}.
In $2d$ (polygons) and $3d$ (polyhedra), the cost of the Voronoi diagram calculation outweighs any other calculation, but is still relatively fast for one million points in $2d$ (25 seconds).
The cost of decomposing the Voronoi cells into simplices is roughly one fifth the cost of computing the Voronoi diagram in $3d$ and $4d$.
In higher dimensions, the cost of integration is limiting and begins to exceed the cost of computing the Voronoi diagram, even for low quadrature orders ($q > 2$).
In general, the cost of computing the Voronoi diagram is even lower than the results shown in Table~\ref{tab:performance-2d-4d} (which is for white noise distributions) once the point distributions exhibit more structure.

\begin{table}[h]
	\centering
	\caption{Performance statistics for white noise in $2d-4d$ for a varying number of Voronoi sites ($N$).
	The timing results (in seconds) are broken into (1) the time to compute the Voronoi diagram ($t_{\mathrm{vor}}$), (2) the time devoted to computing nearest neighbors ($t_{\mathrm{knn}}$), (3) the time to decompose the $d$-polytopes into $d$-simplices ($t_{\mathrm{tri}}$) and (4) the time to perform the numerical integration a particular quadrature order $q$ ($t_q$).}
	\label{tab:performance-2d-4d}
	\resizebox{0.45\textwidth}{!} {
	\begin{tabular}{clllllll}
		\hline
		\rowcolor{gray!20}
		$d$ & $N$ & $t_{\mathrm{vor}}$ & $t_{\mathrm{knn}}$ & $t_{\mathrm{tri}}$ & $t_{q = 2}$ & $t_{q = 3}$ & $t_{q = 4}$ \\ \hline
		$2$ & $10$k & $0.037$ & $0.085$ & $0.065$ & $0.029$ & $0.039$ & $0.049$ \\
		$2$ & $100$k & $2.7$ & $0.71$ & $0.46$ & $0.17$ & $0.26$ & $0.40$ \\
		$2$ & $1$M & $25$ & $12$ & $4.2$ & $1.3$ & $2.2$ & $3.5$ \\ \hline
		$3$ & $10$k  & $2.8$ & $0.10$ & $0.41$ & $0.35$ & $1.0$ & $2.2$ \\
		$3$ & $100$k & $29$ & $1.5$  & $4.4$ & $3.4$ & $10$ & $23$ \\
		$3$ & $250$k & $71 $ & $5.2$  & $11$ & $8.6$ & $25$ & $59$ \\ \hline
		$4$ & $1$k & $4.0$ & $0.027$ & $0.82$ & $1.6$ & $7.3$ & $23$ \\
		$4$ & $10$k & $50$ & $0.40$ & $9.8$  & $18$ & $87$ & $270$ \\
		$4$ & $15$k & $78$ & $0.68$ & $15$ & $28$ & $130$ & $420$ \\ \hline
	\end{tabular}
	}
\end{table}

\section{Applications}\label{sec:applications}
We now use our power diagrams to solve quantization and semi-discrete optimal transport problems in four-dimensional domains.
The difference between the two problems is that (1) in quantization, we optimize Voronoi site coordinates and (2) in optimal transport, we optimize the weights on Voronoi sites.

\subsection{Quantization}
Here, our goal is to compute an optimal point distribution for some prescribed density function.
As we observed in the performance study, the cost of evaluating the integrals is prohibitive for a large number of points.
As a result, we design the densities to be integrated more accurately with lower quadrature orders.

\begin{algsimple}[h!]
	\begin{algcode}[optimizePoints$(N,\rho(\mathbf{x}))$]
		\zi \Inputs number of sites $N$, density $\rho(\mathbf{x})$
		\zi \Outputs optimized point distribution $\mathbf{Y}$
		\li $\mathbf{Y} \gets$ randomly sample $N$ points in domain $\Omega$
		\li $\texttt{iter} \gets 0$
		\li \For $\texttt{iter} = 1:\texttt{nb\_iter}$
		\li    compute RVD: $\mathrm{Vor}(\mathbf{Y}) \cap \Omega$ (Section~\ref{sec:power-cells})
		\li    compute mass and centroids in Eq.~\ref{eq:lloyd-relaxation}
		\li    \If \texttt{lloyd} \Comment{use Lloyd relaxation} \addindent
		\li      update $\mathbf{x}$ to current centroids\remindent
		\li    \Else \Comment{use L-BFGS update} \addindent
		\li      compute gradients $\mathrm{d}E/\mathrm{d}\mathbf{y}_i$ using Eq.~\ref{eq:de-dx}
		\li      perform L-BFGS update on $\mathbf{x}$
						\End
    \End
	\end{algcode}
	\caption{Optimizing a point distribution according to an input density measure, using either Lloyd relaxation or an L-BFGS update.
	In our applications, the input domain $\Omega$ is the unit $d$-cube represented as a polytopal mesh with a single element.
	}
	\label{alg:quantization}
\end{algsimple}

We consider three density measures.
The first is a uniform distribution $\rho_u(\mathbf{x}) = 1$ (the optimized Voronoi cells should appear uniform).
The second density measure is a Gaussian
\begin{equation}
	\rho_g(\mathbf{x}) = \frac{1}{\sqrt{(2\pi)^4\det\Sigma}} \exp\left( -\frac{1}{2} (\mathbf{x} - \boldsymbol{\mu})^T \Sigma^{-1}(\mathbf{x}-\boldsymbol{\mu}) \right)
\end{equation}
where $\boldsymbol{\mu} = (0.5,0.5,0.5,0.5)^T$ and $\Sigma = \mathrm{diag}(0.02,0.02,0.02,0.02)$.
In this case, we expect smaller Voronoi polytopes near $\boldsymbol{\mu}$, which then increase in distance around this mean point.
The third density measure we consider is that which describes an expanding sphere, which traces the geometry of a hypercone in $4d$.
In particular, this density measure is
\begin{equation}
	\rho_c(\mathbf{x}) = 100/(h^2 + 0.001)
\end{equation}
where $h$ is the distance to the cone defined by $\mathbf{r}(t) = r_0 + t\cdot\tan(\alpha)$ where $t$ represents the fourth coordinate in the domain (to be interpreted as time).
Please refer to Fig.~\ref{fig:hypercone-setup} for the geometry of the cone - note that $r_0 = 0.4$ and $r_1 = 0.7$.
Note that this density measure is effectively two-dimensional in an $r-t$ coordinate system.
Rotational symmetry about the $t$ axis leads to the cone geometry.
In fact, when slicing the resulting Voronoi diagram (with points uniformly distributed according to $\rho_c(\mathbf{x})$), we should expect to see smaller cells clustering around the geometry of a $3d$ cone when slicing along a non-constant $t$ hyperplane, but should expect to see smaller cells clustering around a $3d$ sphere at constant $t$ hyperplanes.
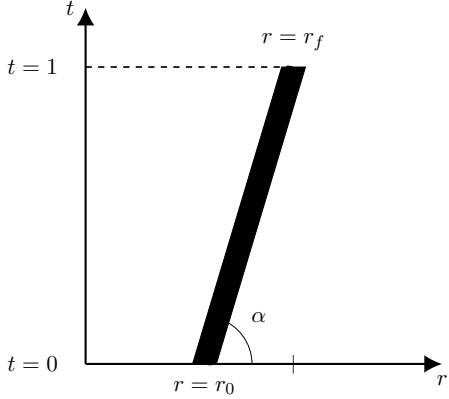
\begin{figure}
	\centering
	\resizebox{0.375\textwidth}{!} {
		\begin{tikzpicture}
			%\draw[->,>=latex,thick] (0,0) -- (6,0);
			%\draw[->,>=latex,thick] (0,0) -- (0,6);
			\draw[-{Latex[length=3mm, width=3mm]},line width=1pt] (0,0) -- (6,0);
			\draw[-{Latex[length=3mm, width=3mm]},line width=1pt] (0,0) -- (0,6);
			\node[below = 2pt] at (6,0) {\large$r$};
			\node[left = 2pt] at (0,6) {\large$t$};

			\draw[thick,line width=6pt] (2,0) -- (3.5,5);
			\draw[fill=black] (1.8,0) -- (2.2,0) -- (3.7,5) -- (3.3,5) -- (1.8,0);

			\draw[thick,dashed] (0,5) -- (3.5,5);

			\node[left=10pt] at (0,0) {\large$t = 0$};
			\node[left=10pt] at (0,5) {\large$t = 1$};

			\node[below=5pt] at (2,0) {\large$r = r_0$};
			\node[above=5pt] at (3.5,5) {\large$r = r_f$};

			\node [above=8pt,right=5pt] at (2.5,0.5) {\large$\alpha$};

			% 73.3 ~ atan(5/1.5) in degrees
			\draw (2,0) ++(73.3:.8) arc (73.3:0:.8);

			\node at (3.5,0) {$|$};

		\end{tikzpicture}
	}
		\caption{Definition of the hypercone in a radial-temporal coordinate system.
		The sphere starts with a radius of $r_0 = 0.4$ at $t = 0$ and expands to a radius of $r_1 = 0.7$ at $t = 1$.
		}
		\label{fig:hypercone-setup}
\end{figure}
\begin{figure}[h]
	\centering
	\includegraphics[width=0.35\textwidth]{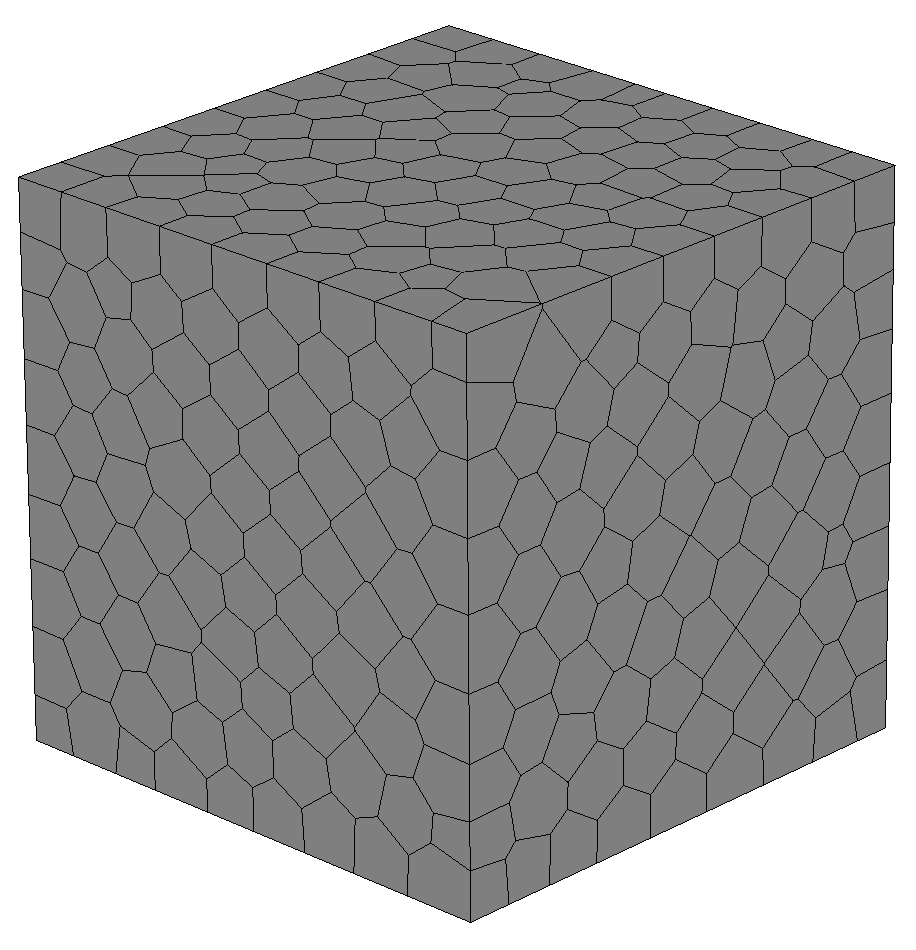}
	\caption{Optimized four-dimensional Voronoi diagram sliced at $t = 0$ (where $t$ is the fourth dimension) for the case of a uniform density, $\rho_u(\mathbf{x})$.}
	\label{fig:qntz-uniform}
\end{figure}%

We optimize four-dimensional point distributions with $N = 10,000$ points using Lloyd relaxation~\cite{Lloyd_1982} the gradient-based L-BFGS optimizer~\cite{Johnson_NLOPT} - this procedure is outlined in Algorithm~\ref{alg:quantization}.
After 100 iterations (function calls in the \texttt{nlopt} optimizer), the point distributions indeed exhibit a uniform distribution under the prescribed density measure.
Fig.~\ref{fig:qntz-uniform} shows uniformly distributed polytopes sliced along the $t = 0$ hyperplane.
Fig.~\ref{fig:qntz-gaussian} shows the interior of a slice along the $t = 0.5$ hyperplane for the Gaussian density.
The clustering of Voronoi cells around the mean is clear and the cell sizes increases with distance from the mean.
Furthermore, the expected clustering around the hypercone is visible in Fig.~\ref{fig:qntz-cone}.
Specifically, when the four-dimensional Voronoi diagram is sliced along a hyperplane with non-constant $t$ ($x = 0$), we can see clustering near a $3d$ cone.
We can also see smaller polytopes clustered around a sphere when the optimized Voronoi diagram is sliced at $t = 0$ (smaller sphere with radius $r_0$) and $t = 1$ (larger sphere with radius $r_1$).
\begin{figure}[h!]
	\centering
  \begin{subfigure}{0.4\textwidth}
		\includegraphics[width=\textwidth]{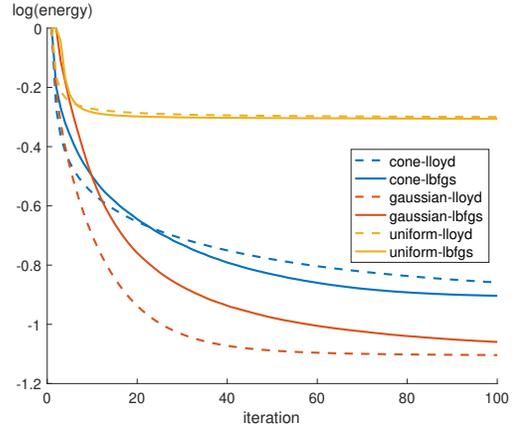}
		\caption{Energy (Eq.~\ref{eq:energy-functional} versus iteration.}
	\end{subfigure}
	\begin{subfigure}{0.4\textwidth}
		\includegraphics[width=\textwidth]{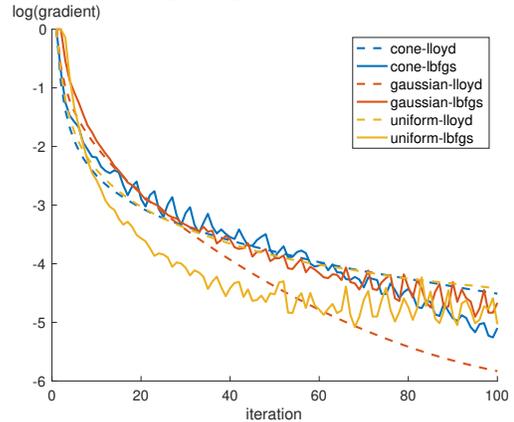}
		\caption{Gradient norm (of Eq.~\ref{eq:de-dx}) versus iteration.}
	\end{subfigure}
  \caption{Convergence of the energy and its gradient for Lloyd relaxation and LBFGS optimizer for all three density functions (uniform, Gaussian, cone) considered in the quantization application.}
	\label{fig:qntz-convergence}
\end{figure}%
Fig.~\ref{fig:qntz-convergence} demonstrates the convergence of the energy functional and gradient norm during the optimization for both Lloyd relaxation (even though it is not driven by the gradient) and the L-BFGS optimizer.
The data in each curve is normalized by the initial energy and gradient norm at the onset of the optimization.
The L-BFGS optimizer achieves a lower energy and gradient norm for the uniform and cone density measures, but performs slightly worse than Lloyd relaxation for the Gaussian density.
Nonetheless, the results demonstrate the ability to obtain point distributions that are uniform with respect to a prescribed density function in $4d$ for a relatively large number of points ($N = 10,000$).
\begin{figure}[h]
	\centering
	\begin{subfigure}{0.325\textwidth}
		\includegraphics[width=\textwidth]{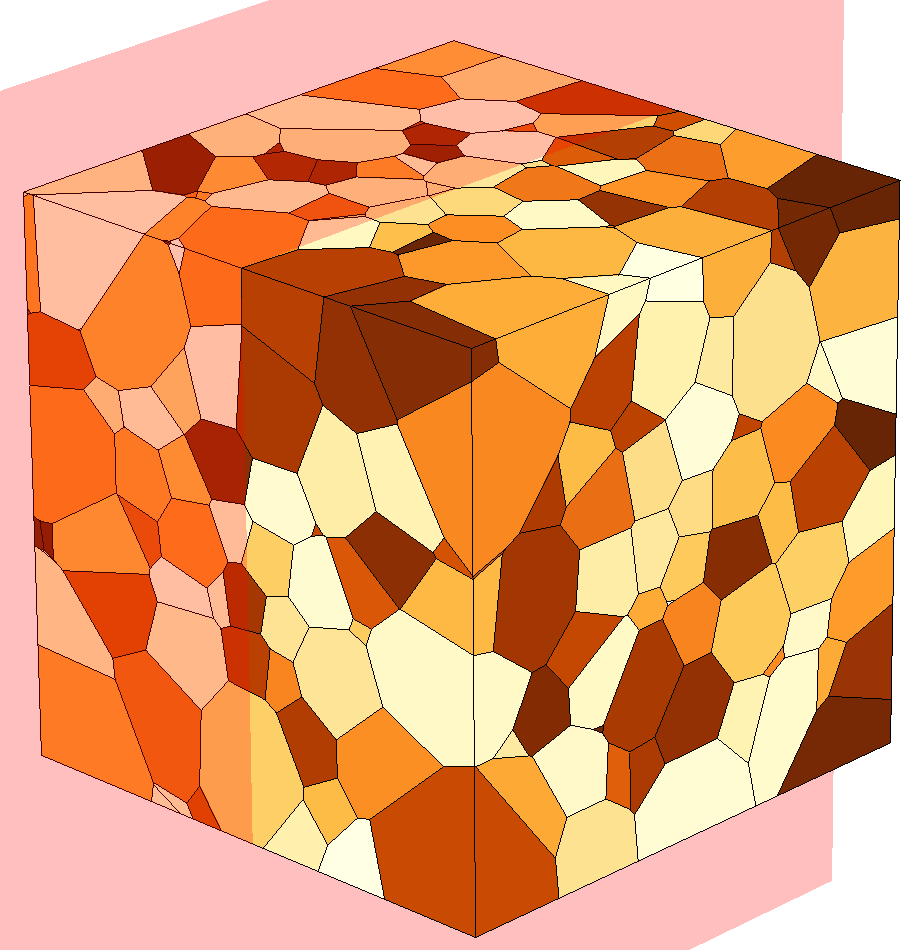}
		\caption{Optimized Voronoi diagram sliced at $t = 0.5$ with clipping plane shown.}
	\end{subfigure}
	\begin{subfigure}{0.325\textwidth}
			\includegraphics[width=\textwidth]{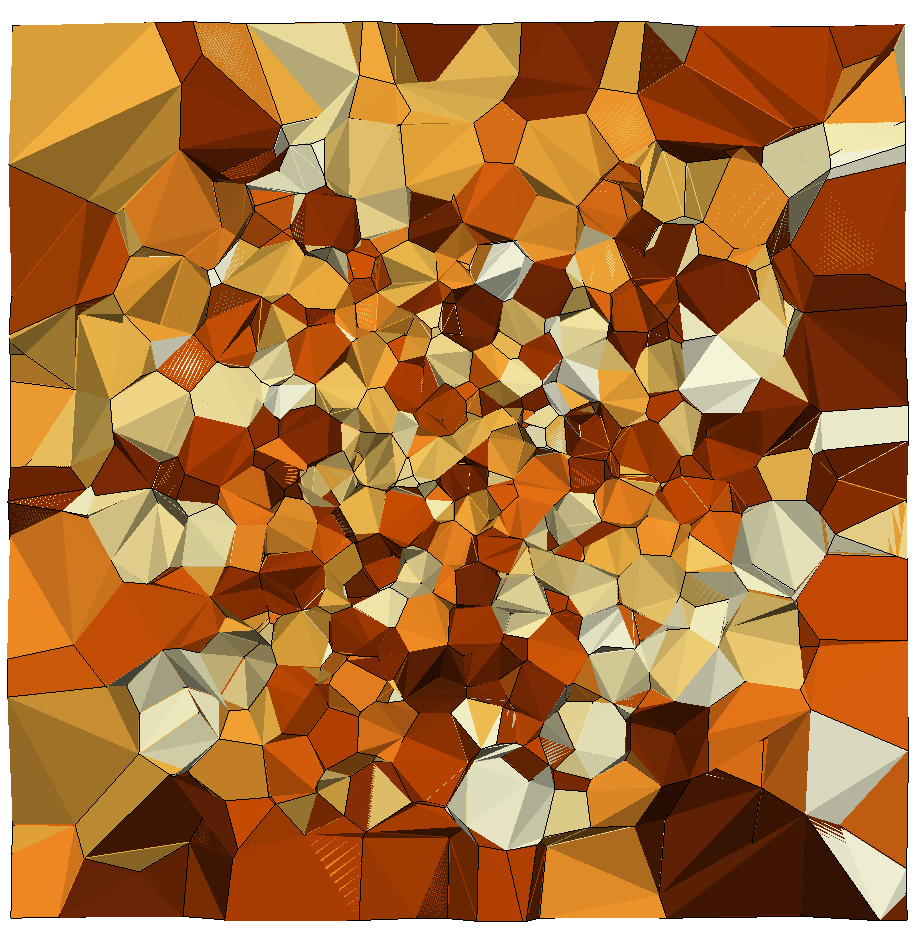}
			\caption{Optimized Voronoi diagram sliced at $t = 0.5$, clipped at $x = 0.5$ (transparent clipping plane shown in pink above).}
	\end{subfigure}
	\caption{Optimized four-dimensional Voronoi diagram sliced at $t = 0.5$ (where $t$ is the fourth dimension) for the case of a Gaussian density, $\rho_g(\mathbf{x})$.
	The bottom figure shows a slice (at $x = 0.5$) of the top figure, where the clipping plane is defined by the transparent plane shaded in pink.
	}
	\label{fig:qntz-gaussian}
\end{figure}%
\begin{figure}[h!]
	\centering
	\begin{subfigure}{0.295\textwidth}
			\includegraphics[width=\textwidth]{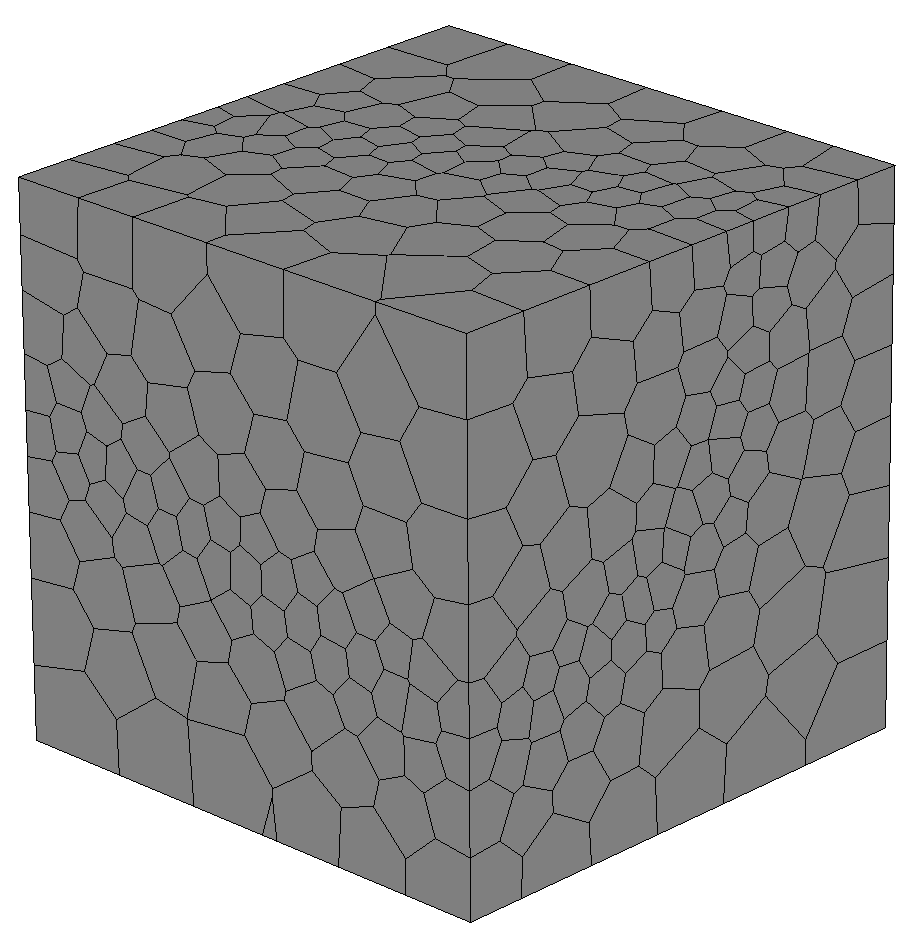}
			\caption{Voronoi diagram slice at $x = 0$.}
	\end{subfigure}
	\begin{subfigure}{0.295\textwidth}
			\includegraphics[width=\textwidth]{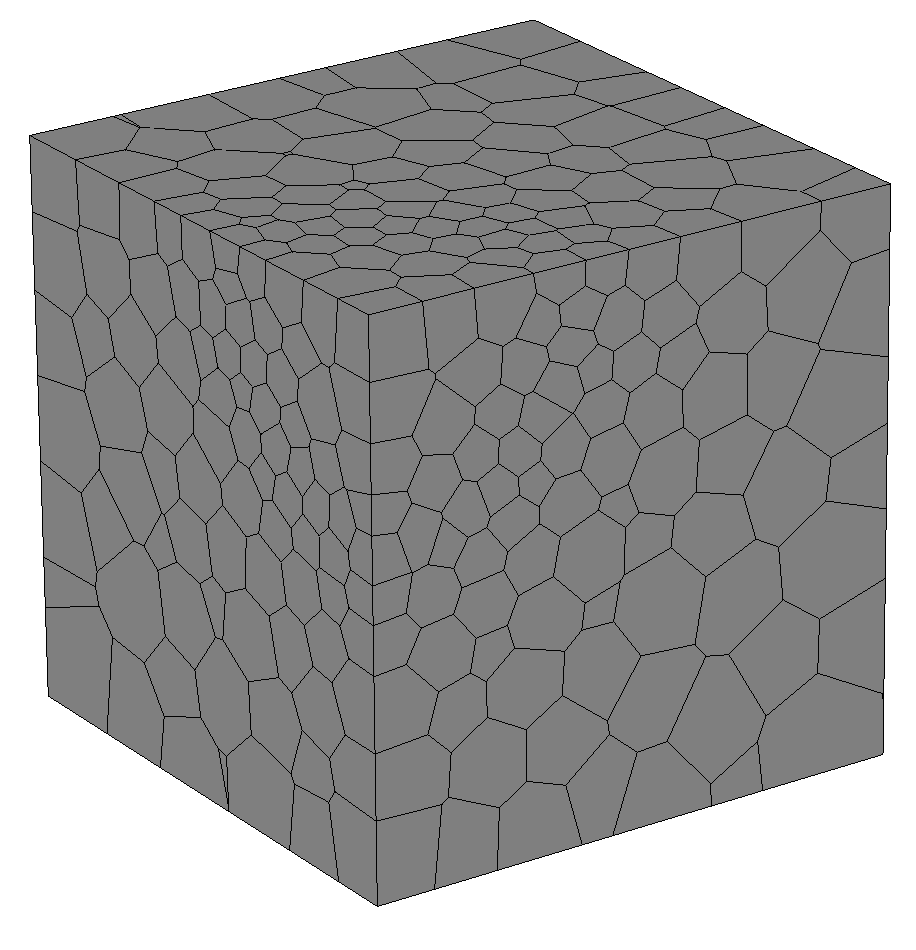}
			\caption{Voronoi diagram slice at $t = 0$.}
	\end{subfigure}
	\begin{subfigure}{0.295\textwidth}
			\includegraphics[width=\textwidth]{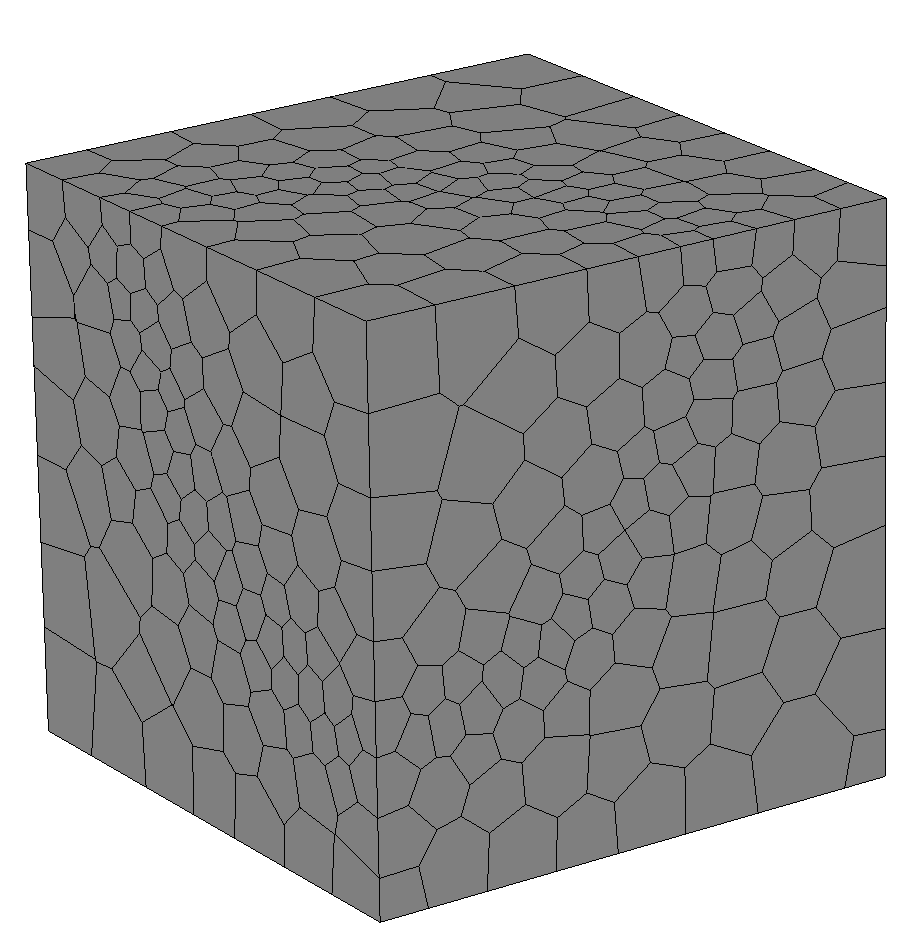}
			\caption{Voronoi diagram slice at $t = 1$.}
	\end{subfigure}
	\caption{Optimized four-dimensional Voronoi diagram sliced at $t = 0$, $t = 1$ (where $t$ is the fourth dimension) and $x = 0$ for the case of a cone-like density, $\rho_c(\mathbf{x})$.
	A three-dimensional cone is visible at $x = 0$ (where the smaller cells cluster), whereas the sphere with initial radius $r_0$ and final radius $r_1$ is shown at $t = 0$ and $t = 1$, respectively.
	Only one quarter of the cone and spheres are visible within this domain.
	}
	\label{fig:qntz-cone}
\end{figure}%
\begin{algsimple}[h!]
	\begin{algcode}[optimizeWeights$(N,\rho(\mathbf{x}))$]
		\zi \Inputs number of sites $N$, density $\rho(\mathbf{x})$
		\zi \Outputs power cells with uniform mass
		\li $\mathbf{Y} \gets \Call{optimizePoints}(N,\rho)$
		\li $\mathbf{w} \gets \mathbf{0}$
		\li $\nu_t \gets m_t/N$\Comment{$m_t$ is the total mass}
		\li \For $\texttt{iter} = 1:\texttt{nb\_iter}$
		\li 	 $\mathbf{Z} \gets $lift $\mathbf{Y}$ to $\mathbb{R}^{d+1}$ (Eq.~\ref{eq:lift-sites})
		\li    compute $\mathrm{Vor}(\mathbf{Z}) \cap \Omega$ (Section~\ref{sec:power-cells})
		\li    compute energy and mass in Eq.~\ref{eq:lloyd-relaxation}
		\li    compute gradients $\mathrm{d}E/\mathrm{d}{w}_i$ using Eq.~\ref{eq:de-dw}
		\li    perform L-BFGS update on $\mathbf{w}$
		\zi    \addindent \hspace{20pt}with $-E$ and $-\mathrm{d}E/\mathrm{d}\mathbf{w}$\remindent
    \End
	\end{algcode}
	\label{alg:sdot}
	\caption{Optimizing the weights to achieve a target mass according to an input density measure.
		In our applications, the input domain $\Omega$ is the unit $d$-cube represented as a polytopal mesh with a single element.
	}
	\label{alg:sdot}
\end{algsimple}
\subsection{Semi-discrete optimal transport}
We now turn our attention to the problem of assigning an equal mass (measured under some input density function) to each Voronoi cell.
We will study smaller problem sizes because we do not currently implement a multiscale algorithm to initialize the weights for the next ``level," although this would certainly accelerate the convergence of the optimization as pointed out by M\'erigot~\cite{Merigot_2011,Merigot_2017} and L\'evy~\cite{Levy_2015}.
We thus use $N = 1000$ points distributed within a four-dimensional domain.

The two density measures we consider here are (1) a uniform density $\rho_u(\mathbf{x}) = 1$ and (2) a spherical density $\rho_s(\mathbf{x}) = 1 + 100\,\lvert|\mathbf{x} - \boldsymbol{\mu}\rvert|^2$ where $\boldsymbol{\mu} = (0.5,0.5,0.5,0.5)^T$.
In contrast to the previous section which employed a Gaussian as the density measure, this spherical density is more accurately integrated with lower quadrature orders, thus keeping the computational cost reasonable for this demonstration.

Before optimizing the weights, we first optimize the point distributions with the L-BFGS method (of the previous section) to achieve a uniform distribution with respect to the input density measure - see Algorithm~\ref{alg:sdot}.
We then obtain the total mass $m_{t}$ by integrating the input density over the entire domain and set the target mass $\nu_t = m_{t}/N$ for every site.
Finally, we optimize the weights by iteratively computing the power diagram and integrating the quantities in Eq.~\ref{eq:de-dw} which are then passed to the L-BFGS optimizer to determine the next set of weights.

The convergence of the energy functional and gradient norm (Eq.~\ref{eq:de-dw}) is shown in Fig.~\ref{fig:sdot-energy}.
Observe that the optimization procedure exhibits a very fast convergence near the $75$th iteration.
Furthermore, Fig.~\ref{fig:sdot-mass} shows the distribution of mass at each iteration of the optimization, normalized by the target mass $\nu_t$ - thus we strive for a normalized mass of $1$ for each cell.
Initially, the cell masses are distributed across a wide range and converge to the target mass at about the $75$th iteration.
This demonstrates the ability of our algorithm to achieve a target mass for each Voronoi cell.
\begin{figure}[h!]
	\centering
  \begin{subfigure}{0.4\textwidth}
		\includegraphics[width=\textwidth]{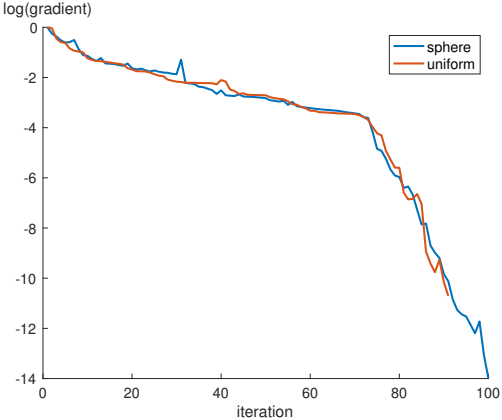}
		\caption{Gradient norm (of Eq.~\ref{eq:de-dw}) versus iteration.}
		\label{fig:sdot-energy}
	\end{subfigure}
	\begin{subfigure}{0.4\textwidth}
		\includegraphics[width=\textwidth]{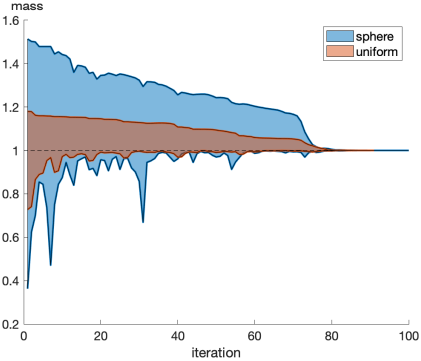}
		\caption{Distribution of power cell mass versus iteration.}
		\label{fig:sdot-mass}
	\end{subfigure}
	\caption{Convergence of the energy gradient and normalized mass for all the uniform and sphere-like density functions considered in the semi-discrete optimal transport application.
	The dashed line in Fig.~\ref{fig:sdot-mass} represents the target mass.
	}
	\label{fig:sdot-convergence}
\end{figure}
\section{Conclusions \& Future Work}
In this paper, we have presented the theory and implementation details for computing higher-dimensional power diagrams, and demonstrated the first implementation for solving four-dimensional quantization and $L^2$ semi-discrete optimal transport problems which seeks a uniform target mass of each Voronoi cell, measured under some input density function.

The performance of the algorithm was also evaluated in up to six dimensions (the topological dimension of the polytopes), which demonstrated a reasonable scaling with the number of input points in dimensions $2-4$.
The cost of computing the Voronoi diagram with a randomly distributed point set was higher than that to compute the Voronoi diagram with a point distribution that exhibited blue noise properties.
Also, the cost of performing the numerical integration to compute gradients significantly outweighed the cost of computing the power diagram, especially for four-dimensional polytopes.
Future work may consist of deriving less expensive quadrature schemes in higher dimensions, which is under way in the works of Williams~\cite{Williams_2020} and Frontin~\cite{Frontin_2020}.

Furthermore, the cost of computing the power diagram is higher than to compute the Voronoi diagram because the introduction of weights lifts the points to a unit codimensional Euclidean space, thereby increasing the distance between sites and requiring several more (possibly non-contributing) bisectors to be clipped against.
This is a failure case of the security radius theorem, as pointed out by Sainlot et al.~\cite{Sainlot_2017}, therefore, future work could consist of extending their corner validation algorithm to the higher-dimensional setting.

Our algorithm is also very well-suited for an implementation on the GPU, similar to the work of Ray et al.~\cite{Ray_2019} which demonstrated an efficient $3d$ restricted Voronoi diagram calculation on the GPU.
Our algorithm could also be made more efficient by employing a multiscale approach~\cite{Merigot_2011,Merigot_2017} in the optimization of the weights.
It would also be useful to implement a more robust optimization which avoids weights that cause power cells to vanish.

Finally, our four-dimensional power diagrams could be used to compute the transport distance between a data set and a continuous density measure in both space and time - Janati et al. recently consider a fully discrete version of this problem~\cite{Janati_2020}.
Other applications could include performing coupled space-time numerical simulations of physical phenomena.
\FloatBarrier
\bibliography{references}

\begin{thebibliography}{10}
\newcommand{\enquote}[1]{``#1''}
\expandafter\ifx\csname url\endcsname\relax
  \def\url#1{\texttt{#1}}\fi
\expandafter\ifx\csname urlprefix\endcsname\relax\def\urlprefix{URL }\fi

\bibitem{Monge_1781}
Monge G.
\newblock \enquote{M\'emoire sur la th\'eorie des d\'eblais et des remblais.}

\bibitem{Balzer_2009}
Balzer M., Schl\"omer T., Deussen O.
\newblock \enquote{Capacity-constrained point distributions: a variant of
  Lloyd's method.}
\newblock \emph{ACM Transasctions on Graphics}, vol.~28, no.~3, 2009

\bibitem{deGoes_2012}
de~Goes F., Breeden K., Ostromoukhov V., Desbrun M.
\newblock \enquote{Blue noise through optimal transport.}
\newblock \emph{ACM Transactions on Graphics}, vol.~31, no.~6, 1--11, 2012

\bibitem{Ma_2018a}
Ma L., Chen Y., Qian Y., Sun H.
\newblock \enquote{Incremental Voronoi sets for instant stippling.}
\newblock \emph{The Visual Computer}, vol.~34, 863--873, 2018

\bibitem{Ma_2018b}
Ma L., Guo J., Yan D.M., Sun H., Chen Y.
\newblock \enquote{Instant stippling on $3d$ scenes.}
\newblock \emph{Pacific Graphics}, vol.~37. 2018

\bibitem{Meyron_2018}
Meyron J.
\newblock \emph{Semi-discrete optimal transport and applications in non-imaging
  optics}.
\newblock Ph.D. thesis, 2018

\bibitem{Meyron_2019}
Meyron J.
\newblock \enquote{Initialization procedures for discrete and semi-discrete
  optimal transport.}
\newblock \emph{Computer-Aided Design}, vol. 115, 13--22, 2019

\bibitem{Peyre_2019}
Peyré G., Cuturi M.
\newblock \enquote{Computational Optimal Transport: With Applications to Data
  Science.}
\newblock \emph{Foundations and Trends® in Machine Learning}, vol.~11, no.
  5-6, 355--607, 2019

\bibitem{deGoes_2011}
de~Goes F., Cohen-Steiner D., Allez P., Desbrun M.
\newblock \enquote{An optimal transport approach to robust reconstruction and
  simplification of $2d$ shapes.}
\newblock \emph{Computer Graphics Forum}, vol.~30, no.~5, 1593--1602, 2011

\bibitem{Bonneel_2019}
Bonneel N., Coeurjolly D.
\newblock \enquote{SPOT: sliced partial optimal transport.}
\newblock \emph{ACM Transactions on Graphics}, vol.~38, no.~4, 1--13, 2019

\bibitem{Paulin_2020}
Paulin L., Bonneel N., Coeurjolly D., Iehl J.C., Webanck A., Desbrun M.,
  Ostromoukhov V.
\newblock \enquote{Sliced optimal transport sampling.}
\newblock \emph{ACM Transactions on Graphics}, vol.~39, no.~4, 2019

\bibitem{Hartmann_2020}
Hartmann V., Schuhmacher D.
\newblock \enquote{Semi-discrete optimal transport: a solution procedure for
  the unsquared Euclidean distance.}
\newblock \emph{Mathematical Methods of Operations Research}, vol.~92,
  133--163, 2020

\bibitem{Levy_2018}
L\'evy B., Schmidt E.L.
\newblock \enquote{Notions of optimal transport and how to implement them on a
  computer.}
\newblock \emph{Computers and Graphics}, vol.~72, 135--148, 2018

\bibitem{deGoes_2015}
de~Goes F., Wallez C., Huang J., Pavlov D., Desbrun M.
\newblock \enquote{Power particles: an incompressible fluid solver based on
  power diagrams.}
\newblock \emph{ACM Transactions on Graphics}, vol.~34, no.~4, 1--11, 2015

\bibitem{Benamou_2000}
Benamou J.D., Brenier Y.
\newblock \enquote{A computational fluid mechanics solution to the
  Monge-Kantorovich mass transfer problem.}
\newblock \emph{Numerische Mathematik}, vol.~84, no.~3, 375--393, 2000

\bibitem{Villani_2009}
Villani C.
\newblock \emph{Optimal transport: old and new}.
\newblock Springer, 2009

\bibitem{Solomon_2017}
Solomon J.
\newblock \enquote{Computational optimal transport.}
\newblock \emph{Snapshots of Modern Mathematics}, vol.~8, 2017

\bibitem{Merigot_2017}
Mérigot Q., Meyron J., Thibert B.
\newblock \enquote{An algorithm for optimal transport between a simplex soup
  and a point cloud.}
\newblock \emph{SIAM Journal on Imaging Sciences}, vol.~11, 2017

\bibitem{Cuturi_2013}
Cuturi M.
\newblock \enquote{Sinkhorn Distances: Lightspeed Computation of Optimal
  Transport.}
\newblock C.J.C. Burges, L.~Bottou, M.~Welling, Z.~Ghahramani, K.Q. Weinberger,
  editors, \emph{Advances in Neural Information Processing Systems}, vol.~26,
  pp. 2292--2300. Curran Associates, Inc., 2013

\bibitem{Aurenhammer_1987}
Aurenhammer F.
\newblock \enquote{Power diagrams: properties, algorithms and applications.}
\newblock \emph{SIAM Journal on Computing}, vol.~16, no.~1, 78--96, 1987

\bibitem{Aurenhammer_1991}
Aurenhammer F.
\newblock \enquote{Voronoi diagrams: a survey of a fundaental geometric data
  structure.}
\newblock \emph{ACM Computing Surveys}, vol.~23, no.~3, 1991

\bibitem{Canas_2006_Surface_remeshing_arbitrary_codimension}
Ca\~nas G.D., Gortler S.J.
\newblock \enquote{Surface Remeshing in Arbitrary Codimensions.}
\newblock \emph{The Visual Computer}, vol.~22, no.~9, 885--895, Sep. 2006

\bibitem{Levy_2013_Vorpaline}
L\'evy B., Bonneel N.
\newblock \enquote{Variational Anisotropic Surface Meshing with {V}oronoi
  Parallel Linear Enumeration.}
\newblock \emph{Proceedings of the 21st International Meshing Roundtable}. 2012

\bibitem{Dassi_2015}
Dassi F., Si H., Perotto S., Streckenbach T.
\newblock \enquote{Anisotropic Finite Element Mesh Adaptation via Higher
  Dimensional Embedding.}
\newblock \emph{Procedia Engineering}, vol. 124, 265 -- 277, 2015.
\newblock 24th International Meshing Roundtable

\bibitem{Dassi_2016}
Dassi F., Farrell P., Si H.
\newblock \enquote{An Anisoptropic Surface Remeshing Strategy Combining Higher
  Dimensional Embedding with Radial Basis Functions.}
\newblock \emph{Procedia Engineering}, vol. 163, 72 -- 83, 2016.
\newblock 25th International Meshing Roundtable

\bibitem{Dassi_2017}
Dassi F., Perotto S., Si H., Streckenbach T.
\newblock \enquote{A Priori Anisotropic Mesh Adaptation Driven by a Higher
  Dimensional Embedding.}
\newblock \emph{Computer-Aided Design}, vol.~85, 111 -- 122, 2017

\bibitem{Nivoliers_2015}
Nivoliers V., L{\'e}vy B., Geuzaine C.
\newblock \enquote{Anisotropic and Feature Sensitive Triangular Remeshing Using
  Normal Lifting.}
\newblock \emph{Journal of Computational and Applied Mathematics}, vol. 289,
  225--240, Dec. 2015

\bibitem{Levy_2020}
L\'evy B., Mohayaee R., von Hausegger S.
\newblock \enquote{A fast semi-discrete optimal transport algorithm for a
  unique reconstruction of the early Universe.}
\newblock 2020

\bibitem{Levy_2016_Geogram}
L\'evy B.
\newblock \enquote{{G}eogram: A Programming Library of Geometric Algorithms.},
  2016

\bibitem{Rycroft_2009}
Rycroft C.H.
\newblock \enquote{Voro++: A three-dimensional Voronoi cell library in C++.}
\newblock \emph{Chaos}, vol.~19, no.~4, 2009

\bibitem{CGAL_software}
{The CGAL Project}.
\newblock \emph{{CGAL} User and Reference Manual}.
\newblock {CGAL Editorial Board}, {5.2} edn., 2020.
\newblock \urlprefix\url{https://doc.cgal.org/5.2/Manual/packages.html}

\bibitem{CGAL_Voronoi}
Karavelas M.
\newblock \enquote{{2D} Voronoi Diagram Adaptor.}
\newblock \emph{{CGAL} User and Reference Manual}. {CGAL Editorial Board},
  {5.2} edn., 2020

\bibitem{Sutherland_1974}
Sutherland I.E., Hodgman G.W.
\newblock \enquote{Reentrant polygon clipping.}
\newblock \emph{Communications of the ACM}, vol.~17, no.~1, 1974

\bibitem{Henk_2004_Basic_properties_of_convex_polytopes}
Henk M., Richter-Gebert J., Ziegler G.M.
\newblock \enquote{Basic Properties of Convex Polytopes.}
\newblock \emph{Handbook of Discrete and Computational Geometry, 2nd Ed.} 2004

\bibitem{Ziegler_1995_Lectures_on_Polytopes}
Ziegler G.M.
\newblock \emph{Lectures on Polytopes}, vol. 152 of \emph{Graduate Texts in
  Mathematics}.
\newblock Springer, 1995

\bibitem{Levy_2015}
L\'evy B.
\newblock \enquote{A numerical algorithm for $L^2$ semi-discrete optimal
  transport.}
\newblock \emph{ESAIM: Mathematical Modelling and Numerical Analysis}, vol.~49,
  1693--1715, 2015

\bibitem{Santambrogio_2015}
Santambrogio F.
\newblock \emph{Optimal transport for applied mathematicians}.
\newblock Birkhauser Basel, 2015

\bibitem{Villani_2003}
Villani C.
\newblock \emph{Topics in optimal transportation}, vol.~58.
\newblock American Mathematical Society, 2003

\bibitem{Kantorovich_1958}
Kantorovitch L.
\newblock \enquote{On the translocation of masses.}
\newblock \emph{Management Science}, vol.~5, no.~1, 1--4, 1958

\bibitem{Lloyd_1982}
Lloyd S.P.
\newblock \enquote{Least Squares Quantization in {PCM}.}
\newblock \emph{IEEE Transaction on Information Theory}, vol.~28, no.~2,
  129--137, 1982

\bibitem{Du_1999_CVT_Algorithm_Application}
Du Q., Faber V., Gunzburger M.
\newblock \enquote{Centroidal {V}oronoi Tessellations: Applications and
  Algorithms.}
\newblock \emph{SIAM Review}, vol.~41, 637--676, 1999

\bibitem{Liu_1989}
Liu D.C., Nocedal J.
\newblock \enquote{On the limited memory BFGS method for large scale
  optimiation.}
\newblock \emph{Mathematical Programming}, vol.~45, 503--528, 1989

\bibitem{Johnson_NLOPT}
Johnson S.G.
\newblock \enquote{The {NL}opt nonlinear-optimization package, available at:
  \texttt{http://ab-initio.mit.edu/nlopt}.}

\bibitem{Kitagawa_2016}
Kitagawa J., M\'erigot Q., Thibert B.
\newblock \enquote{Convergence of a Newton algorithm for semi-discrete optimal
  transport.}
\newblock 2016

\bibitem{Merigot_2011}
M\'erigot Q.
\newblock \enquote{A multiscale approach to optimal transport.}
\newblock \emph{Computer Graphics Forum}, vol.~30, 1583--1592, 2011

\bibitem{Caplan_2019_PhD}
Caplan P.C.
\newblock \emph{Four-dimensional Anisotropic Mesh Adaptation for Spacetime
  Numerical Simulations}.
\newblock {PhD} thesis, Massachusetts Institute of Technology, 2019

\bibitem{Shewchuk_1996_Adaptive_Precision}
Shewchuk J.R.
\newblock \enquote{Adaptive Precision Floating-Point Arithmetic and Fast Robust
  Geometric Predicates.}
\newblock \emph{Discrete \& Computational Geometry}, vol.~18, no.~3, 305--363,
  1996

\bibitem{Levy_2016_PCK}
L\'evy B.
\newblock \enquote{{Robustness and Efficiency of Geometric Programs: The
  Predicate Construction Kit}.}
\newblock \emph{Computer-Aided Design}, vol.~72, 3--12, 2016

\bibitem{Stroud_1971_Approximate_calculation_of_multiple_integrals}
Stroud A.
\newblock \emph{{Approximate Calculation of Multiple Integrals}}.
\newblock Prentice-Hall Inc., 1971

\bibitem{BeldaFerrin_2020}
Ferr\'in G.B., Ruiz-Giron\'es E., Roca X.
\newblock \enquote{Visualization of pentatopic meshes.}
\newblock Tech. rep., 2020

\bibitem{Si_2005_TetGen}
Si H.
\newblock \enquote{TetGen: A Quality Tetrahedral Mesh Generator and
  Three-Dimensional {D}elaunay Triangulator.}
\newblock Weierstrass Institute for Applied Analysis and Stochastics, 2005.
\newblock \texttt{http://tetgen.berlios.de}

\bibitem{Mitchell_2018}
Mitchell S.A.
\newblock \enquote{Spoke-Darts for high-dimensional blue-noise sampling.}
\newblock \emph{ACM Transactions on Graphics}, vol.~37, no.~2, 2018

\bibitem{Williams_2020}
Williams D.M., Frontin C.V., Miller E.A., Darmofal D.L.
\newblock \enquote{A family of symmetric, optimized quadrature rule for
  pentatopes.}
\newblock \emph{Computers \& Mathematics with Applications}, vol.~80, no.~5,
  1405--1420, 2020

\bibitem{Frontin_2020}
Frontin C.V., Walters G.S., Witherden F.D., Lee C.W., Williams D.M., Darmofal
  D.L.
\newblock \enquote{Foundations of space-time finite element methods: polytopes,
  interpolation and integration.}
\newblock \emph{arXiV preprint}, 2020

\bibitem{Sainlot_2017}
Sainlot M., Nivoliers V., Attali D.
\newblock \enquote{Restricting Voronoi diagrams to meshes using corner
  validation.}
\newblock \emph{Eurographics Symposium on Geometry Processing}, vol.~36, no.~5,
  2017

\bibitem{Ray_2019}
Ray N., Sokolov D., Lefebvre S., L\'evy B.
\newblock \enquote{Meshless Voronoi on the GPU.}
\newblock \emph{ACM Transactions on Graphics}, vol.~37, no.~6, 2018

\bibitem{Janati_2020}
Janati H., Cuturi M., Gramfort A.
\newblock \enquote{Spatio-temporal alignments: optimal transport through space
  and time.}
\newblock S.~Chiappa, R.~Calandra, editors, \emph{Proceedings of the Twenty
  Third International Conference on Artificial Intelligence and Statistics},
  vol. 108, pp. 1695--1704. 26--28 Aug 2020

\end{thebibliography}

\end{document}